\theoremstyle{plain}
\newtheorem{theorem}{Theorem}
\newtheorem{definition}[theorem]{Definition}
\newtheorem{corollary}[theorem]{Corollary}
\theoremstyle{definition}
\newtheorem{example}[theorem]{Example}
\newtheorem{remark}[theorem]{Remark}
\numberwithin{equation}{section}
\numberwithin{theorem}{section}
\DeclareMathOperator{\Int}{int}
\DeclareMathOperator{\bd}{bd}
\DeclareMathOperator{\diag}{diag}
\newcommand{\bs}[1]{{\boldsymbol{#1}}}
\newcommand{\pdt}{\partial_t}
\newcommand{\R}{\mathbf{R}}
\newcommand{\D}{\,\mathrm{d}}
\begin{document}
\title{{Replicator equations and space}}

\author{Alexander S. Bratus$^{1,2}$, Vladimir P. Posvyanskii$^{2}$\,, Artem S. Novozhilov$^{{3},}$\footnote{Corresponding author: artem.novozhilov@ndsu.edu}\\[3mm]
\textit{\normalsize $^\textrm{\emph{1}}$Faculty of Computational Mathematics and Cybernetics,}\\[-1mm]
\textit{\normalsize Lomonosov Moscow State University, Moscow 119992, Russia}\\[2mm]
\textit{\normalsize $^\textrm{\emph{2}}$Applied Mathematics--1, Moscow State University of Railway Engineering,}\\[-1mm]\textit{\normalsize Moscow 127994, Russia}\\[2mm]
\textit{\normalsize $^\textrm{\emph{3}}$Department of Mathematics, North Dakota State University, Fargo, ND 58108, USA}}

\date{}

\maketitle

\begin{abstract}
A reaction--diffusion replicator equation is studied. A novel method to apply the principle of global regulation is used to write down the model with explicit spatial structure. Properties of stationary solutions together with their stability are analyzed analytically, and relationships between stability of the rest points of the non-distributed replicator equation and distributed system are shown. A numerical example is given to show that the spatial variable in this particular model promotes the system's permanence.

\paragraph{\small Keywords:} Replicator equation, reaction-diffusion systems, stability, permanence
\paragraph{\small AMS Subject Classification:} Primary:  35K57, 35B35, 91A22; Secondary: 92D25
\end{abstract}

\section{Introduction}

The classical replicator equation \cite{hofbauer1998ega,hofbauer2003egd} models a wide array of different biological phenomena, including those in theoretical population genetics \cite{hofbauer1998ega}, evolutionary game theory \cite{hofbauer2003egd,nowak2006evolutionary}, or in theories of the origin of life \cite{eigen1989mcc}. In its general form, a replicator equations can be written as
\begin{equation}\label{eq0:1}
    \dot{w}_i=w_i\bigl(f_i(\bs w)-f^{l}(t)\bigr),\quad i=1,\ldots,n,
\end{equation}
for the vector $\bs w=(w_1,\ldots,w_n)$ of system variables. In the following we will speak of $\bs w$ as the vector of concentrations of macromolecules that interact with each other; however, different interpretations are possible. The interactions are modeled through the rate coefficients (fitnesses) $f_i(\bs w)$, which depend in general on the concentrations of other macromolecules. The expression $f^{l}(t)$ is necessary to keep the total concentration $\sum_{i=1}^nw_i$ constant. Very often $f_i(\bs w)=\sum_{j=1}^na_{ij}w_j$ for some real matrix $\bs A=(a_{ij})_{n\times n}$.

Model \eqref{eq0:1} is a system of ordinary differential equations, which implies that one assumes that there is no spatial structure in the studied system, or, in different words, the reactor that contains the macromolecules is so well stirred that any macromolecule has equal chance to interact with any other. For many systems such an assumption is a very crude approximation, therefore it is of significant interest to consider modifications of \eqref{eq0:1} that include explicit spatial structure.

There are different ways to add space to a mathematical model, and often the properties of the new model differ in an important way from the properties of the original mean-field model \cite{dieckmann2000}. A straightforward way to add space in ecological models by adding the Laplace operator to the right hand sides of the equations does not work for \eqref{eq0:1} because of the fact that there is an additional condition that $\sum_{i=1}^nw_i=\mbox{const}$ (see also \cite{novozhilov2012reaction} for a discussion). Various methods were utilized to overcome this obstacle, see, e.g., \cite{cressman1987density,cressman1997sad,ferriere2000ads,hutson1995sst,hutson1992travelling,Vickers1991,vickers1989spa,vickers1993spatial,weinberger1991ssa} and references therein. Our approach to tackle this particular problem is to use the principle of global regulation \cite{bratus2006ssc,bratus2009existence,bratus2011}. However, notwithstanding a number of interesting and new results concerning the spatially nonuniform stationary states, the equations that we studied in \cite{bratus2006ssc,bratus2009existence,bratus2011} behave similarly to the solutions of the non-distributed replicator equation \eqref{eq0:1} (this statement can be made precise, see the cited references). Therefore, for the purpose of the current work, we undertook a different approach to add space to the replicator equation \eqref{eq0:1}. We still keep the premise of the global regulation (see below), but the resulting equations have quite different properties, whose analytical and numerical analysis is the contents of the present manuscript.

The rest of the paper is organized as follows. In Section 2 we fix the notations and state the mathematical problem, which is in the center of our analysis. For the purpose of comparison we also present the reaction--diffusion replicator systems that we studied in \cite{bratus2006ssc,bratus2009existence,bratus2011}. Section 3 is devoted to the analysis of the stationary solutions of the corresponding replicator equations. We find the conditions under which the distributed system behaves like the mean-field model. In Section 4 stability of the stationary solutions is analyzed; again, we are able to prove that some particular knowledge on the stability of the stationary points in the non-distributed case can be used to infer the stability of the stationary state of the distributed system. From the biological point of view it is very important to guarantee that none of the macromolecules go extinct with the time, this condition is formalized mathematically using the notions of \textit{persistence} and \textit{permanence} (e.g., \cite{cantrell2003spatial}). In Section 5 we obtain a sufficient condition for our system to be persistent. Although a great deal of analysis can be accomplished analytically (Sections 2--5), the replicator equation that we study actually possesses the property that even if in the non-distributed system some of the species go extinct, the distributed reaction-diffusion replicator equation supports the existence of all macromolecules; in Section 6 we give an example of such behavior, basing our particular model on the \textit{in vitro} experiments of RNA self-replicating molecules \cite{Vaidya2012}. Finally, Section 7 is devoted to concluding remarks and comments.

\section{Problem statement and notations}

Let $\Omega$ be a bounded domain in $\R^m$ with a piecewise smooth boundary $\Gamma$, and $m$ denotes the dimension of the problem, we consider only $m=1,2,$ or $3$. Without loss of generality we assume that $|\Omega|=1$, i.e., the volume of $\Omega$ is equal to 1. Denote $N_k(\bs x,t)$ the number of macromolecules of the $k$-th type, $k=1,\ldots,n,$ per volume unit at the time moment $t$ at the point $\bs x\in\Omega\subset \R^m$. We postulate that the relative rate of change of $N_k(\bs x,t)$ at the point $\bs x\in\Omega$ is governed by the following law
\begin{equation}\label{eq1:1}
    \frac{\pdt N_k(\bs x,t)}{N_k(\bs x,t)}=\bigl(\bs{A N}(\bs x,t)\bigr)_k+d_k\Delta N_k(\bs x,t),\quad k=1,\ldots,n,
\end{equation}
where $\pdt N_k(\bs x,t)=\frac{\partial N_k}{\partial t}(\bs x,t),$ $\bs N(\bs x,t)=(N_1(\bs x,t),\ldots, N_n(\bs x,t))$, $\bs A=(a_{ks})_{n\times n}$ is an $n\times n$ real matrix, $$\bigl(\bs{A N}(\bs x,t)\bigr)_k=\sum_{s=1}^na_{ks} N_s(\bs x,t),$$ $\Delta$ is the Laplace operator, in Cartesian coordinates $\Delta =\sum_{i=1}^m\frac{\partial^2}{\partial x_i^2}\,,\,m=1,2,3$, $d_k>0$ are the numbers that characterize the influence of the uniform diffusion on the rate of change of the densities $N_k(\bs x,t),\,k=1,\ldots,n$.

A possible interpretation of \eqref{eq1:1} is that we consider a porous medium diffusion equation
$$
\phi_k \pdt N_k=f_k(\bs N)+\Delta N_k,\quad k=1,\ldots,n,
$$
for which the porosity $\phi_k$ depends on the local concentrations $N_k(\bs x,t)$: $\phi_k(N_k(\bs x,t))=N_k^{-1}(\bs x,t)$, which means that inverse in the number of particles reduces the space available (for the diffusion equation in a porous medium see, e.g., \cite{barenblatt1989theory,knabner2003numerical}).

The initial conditions are
\begin{equation}\label{eq1:2}
    N_k(\bs x,t)=\Psi_k(\bs x),\quad k=1,\ldots,n,
\end{equation}
and the boundary conditions are
\begin{equation}\label{eq1:3}
    \left.\frac{\partial N_k(\bs x,t)}{\partial \nu}\right|_{\bs x\in \Gamma}=0,\quad k=1,\ldots,n,
\end{equation}
where $\nu$ is the outward normal to the boundary $\Gamma$ of $\Omega$. Condition \eqref{eq1:3} describes the zero flux of the macromolecules through boundary $\Gamma$.

System \eqref{eq1:1}--\eqref{eq1:3} defines a \textit{selection system} \cite{karev2010}. It is usually more convenient to replace such system with the corresponding \textit{replicator equation}, which describes the change of frequencies (see, e.g., \cite{novozhilov2012reaction}).

Let
$$
\Sigma(t)=\sum_{i=1}^n\int_\Omega N_i(\bs x,t)\D\bs x,
$$
and assume for the following that $\Sigma(t)>0$ for any $t\geq 0$. Then the corresponding frequencies of macromolecules are defined as
$$
v_k(\bs x,t)=\frac{N_k(\bs x,t)}{\Sigma(t)}\,,\quad k=1,\ldots,n.
$$
By construction we have
\begin{equation}\label{eq1:4}
    \sum_{k=1}^n\int_{\Omega} v_k(\bs x,t)\D \bs x=1.
\end{equation}
Direct calculations lead to
$$
\pdt v_k(\bs x,t)=\Sigma(t)v_k(\bs x,t)\Bigl(\bigl(\bs A\bs v(\bs x,t)\bigr)_k-f^{sp}(t)+d_k\Delta v_k(\bs x,t)\Bigr),\quad k=1,\ldots,n,
$$
where
$$
f^{sp}(t)=\int_\Omega \Bigl(\langle \bs{Av}(\bs x,t),\bs v(\bs x,t)\rangle+\sum_{k=1}^n d_kv_k(\bs x,t)\Delta v_k(\bs x,t)\Bigr)\D \bs x.
$$
Hereinafter $\langle\cdot,\cdot\rangle$ denotes the standard inner product in $\R^n$, $\bs v(\bs x,t)=(v_1(\bs x,t),\ldots,v_n(\bs x,t))$. Note that from \eqref{eq1:3} it follows that for the frequencies $v_k(\bs x,t)$ the boundary conditions
\begin{equation}\label{eq1:5}
    \left.\frac{\partial v_k(\bs x,t)}{\partial \nu}\right|_{\bs x\in \Gamma}=0,\quad k=1,\ldots,n
\end{equation}
hold. Therefore, using Green's identity, the expression for $f^{sp}(t)$ can be rewritten as
\begin{equation}\label{eq1:6}
f^{sp}(t)=\int_\Omega \Bigl(\langle \bs{Av}(\bs x,t),\bs v(\bs x,t)\rangle-\sum_{k=1}^n d_k \sum_{i=1}^m \left[\frac{\partial v_k(\bs x,t)}{\partial x_i}\right]^2\Bigr)\D \bs x.
\end{equation}
The last term in \eqref{eq1:6} can be rewritten in the form $\sum_{k=1}^nd_k\langle\nabla v_k,\nabla v_k\rangle=\sum_{k=1}^n d_k\|\nabla v_k\|^2$. Note that expression \eqref{eq1:6} for any $t\geq 0$ is a functional defined on the set of vector-functions $\bs v(\bs x,t)$.

Letting $t=\int_{0}^\tau \Sigma(\varsigma)\D\varsigma$, we finally obtain the system
\begin{equation}\label{eq1:7}
   \partial_{\tau}v_k(\bs x,\tau)=v_k(\bs x,\tau)\Bigl(\bigl(\bs A\bs v(\bs x,\tau)\bigr)_k-f^{sp}(\tau)+d_k\Delta v_k(\bs x,\tau)\Bigr),\quad k=1,\ldots,n,
\end{equation}
with the initial conditions
\begin{equation}\label{eq1:8}
    v_k(\bs x,0)=\varphi_k(\bs x),\quad k=1,\ldots,n,
\end{equation}
which follow from \eqref{eq1:2}, and boundary conditions \eqref{eq1:5}.

Note that from \eqref{eq1:7} and equality \eqref{eq1:6}, taking into account \eqref{eq1:5}, we have that
$$
\frac{\partial}{\partial \tau}\Bigl(\sum_{k=1}^n\int_\Omega v_k(\bs x,\tau)\D \bs x\Bigr)=0,
$$
which corresponds to \eqref{eq1:4}.

In the following we will call the functional $f^{sp}(t)$ \textit{the mean fitness} of the population of macromolecules, whereas the quantity $\bigl(\bs{Av}(\bs x,t)\bigr)_k$ will be referred to as the \textit{fitness} of the $k$-th macromolecule at the point $\bs x\in\Omega$ at the time moment $\tau$. From now on we will also use the variable $t$ instead of $\tau$, keeping in mind that this is a rescaled time.

System \eqref{eq1:4}--\eqref{eq1:8} will be called \textit{the reaction--diffusion replicator equation with the global regulation of the second kind} as opposed to the reaction--diffusion replicator equation with the global regulation of the first kind (see \cite{novozhilov2012reaction} for a concise review on the reaction--diffusion replicator systems and \cite{bratus2012diffusive,bnp2010,bratus2006ssc,bratus2009existence,bratus2011} for an in-depth analysis of such systems). We recall that in the cited papers dynamics and the limit behavior of the replicator systems of the form
\begin{equation}\label{eq1:9}
    \pdt v_k(\bs x,t)=v_k(\bs x,t)\Bigl(\bigl(\bs{Av}(\bs x,t)\bigr)_k-f_1^{sp}(t)\Bigr)+d_k\Delta v_k(\bs x,t),\quad k=1,\ldots,n,
\end{equation}
was studied. In \eqref{eq1:9} the mean fitness $f_1^{sp}(t)$ is given by
\begin{equation}\label{eq1:10}
    f_1^{sp}(t)=\int_\Omega \langle\bs{Av}(\bs x,t),\bs v(\bs x,t)\rangle\D\bs x,
\end{equation}
and $v_k(\bs x,t)$ are nonnegative functions satisfying \eqref{eq1:4}, \eqref{eq1:5}, and \eqref{eq1:8}.

Therefore, systems \eqref{eq1:4}--\eqref{eq1:8} and \eqref{eq1:9}--\eqref{eq1:10} differ both by the form of the equations and by the expressions for the mean population fitness, coinciding in the limit $d_k\to 0$. The mean fitness $f_1^{sp}(t)$ of the system \eqref{eq1:9}, \eqref{eq1:10} does not depend on the spatially non-uniform distribution and coincide in the form with the usual mean fitness for the non-distributed replicator equation \cite{hofbauer1998ega,hofbauer2003egd}. At the same time, the mean fitness $f^{sp}(t)$ of \eqref{eq1:4}--\eqref{eq1:8} includes the dependence on the square of the expression that characterizes the rate of change of the form of the spatially non-uniform distribution of the densities. This is an important feature of the reaction--diffusion replicator equation with the global regulation of the second kind. Eventually, both of the systems \eqref{eq1:4}--\eqref{eq1:8} and \eqref{eq1:9}--\eqref{eq1:10} represent possible generalizations of the classical replicator equation \eqref{eq0:1} for the case of explicit spatial structure under different principles of global regulation.

In the following we assume that the functions $v_k(\bs x,t),\,\bs x\in\Omega,\,t\geq 0,\,k=1,\ldots,n$ are differentiable with respect to $t$, and, together with their derivatives with respect to $t$, belong to the Sobolev space $W^{1,2}(\Omega)$ if $m=1$, or to $W^{2,2}(\Omega)$ if $m=2,3$, as functions of the variable $\bs x\in\Omega$ for each fixed $t$. Here $W^{r,2}(\Omega)$ are the usual Sobolev spaces such that their elements belong to $L^2(\Omega)$ together with all the (weak) derivatives up to the order $r$. We note that the embedding theorems imply that the elements of $W^{r,2}(\Omega),\,r=1,2$ coincide with continuous functions on $\Omega$ almost everywhere (e.g., \cite{evans_2010}).

Denote $\Omega_t=\Omega\times[0,\infty)$ and consider the space of functions $B(\Omega_t)$ with the norm
$$
\|y\|_{B(\Omega_t)}=\max_{t\geq 0}\bigl\{\|y(\bs x,t)\|_{W^{r,2}}+\|\pdt y(\bs x,t)\|_{W^{r,2}} \bigr\},\quad r=1,2.
$$
Let $S_n(\Omega_t)$ denote the set of functions from $B(\Omega_t)$ for which \eqref{eq1:4} holds. The set $S_n(\Omega_t)$ is \textit{an integral simplex} in the space $B(\Omega_t)$. Together with $S_n(\Omega_t)$ also consider the set $S_n(\Omega)$ of the vector-functions $\bs u(\bs x)=(u_1(\bs x),\ldots,u_n(\bs x))$ such that $u_k(\bs x)\in W^{r,2}(\Omega)$ ($r=1,2$) for which
\begin{equation}\label{eq1:11}
    \sum_{k=1}^n\int_\Omega u_k(\bs x)\D \bs x=1
\end{equation}
holds. The set $S_n(\Omega)$ is \textit{an integral simplex} in the space $W^{r,2}(\Omega),\,(r=1,2)$. We consider weak solutions to the system \eqref{eq1:4}--\eqref{eq1:8}, i.e., such solutions $\bs v(\bs x,t)\in S_n(\Omega_t)$ for which the following integral equalities hold
\begin{align*}
\int_0^\infty\int_{\Omega}\pdt v_k(\bs x,t)\eta(\bs x,t)\D \bs x\D t&=\int_0^\infty\int_{\Omega}v_k(\bs x,t)\bigl((\bs{Av}(\bs x,t))_k-f^{sp}(t)\bigr)-\\
&-d_k \int_0^\infty\int_{\Omega}\sum_{i=1}^nd_i\Bigl(\sum_{l=1}^m\left[\frac{\partial v_i}{\partial x_l}\right]^2+\sum_{l=1}^m\frac{\partial v_i}{\partial x_l}\frac{\partial \eta}{\partial x_l}\Bigr)\D\bs x\D t
\end{align*}
for any function $\eta=\eta(\bs x,t)$ on compact support that is differentiable with respect to $t$, and for each $t\geq 0$ belongs to $W^{r,2}(\Omega),\,r=1,2$.

Together with the problem \eqref{eq1:4}--\eqref{eq1:8} consider the classical replicator equation (e.g., \cite{hofbauer1998ega})
\begin{equation}\label{eq1:12}
   \dot{w}_k(t)=w_k(t)\Bigl(\bigl(\bs{Aw}(t)\bigr)_k-f^{l}(t)\Bigr),\quad k=1,\ldots,n,
\end{equation}
where $f^l(t)=\langle \bs{Aw}(t),\bs w(t)\rangle$. The system \eqref{eq1:12} is defined on the simplex $S_n$ of smooth nonnegative functions $\bs w(t)=(w_1(t),\ldots,w_n(t))$ such that
$$
\sum_{k=1}^nw_k(t)=1
$$
for any $t$.

We will also need the definitions of the boundary and interior sets of the (integral) simplex $S_n(\Omega_t)$ (or $S_n(\Omega)$, or $S_n$).

\begin{definition}\label{def1:1} The boundary set $\bd S_n(\Omega_t)$ of $S_n(\Omega_t)$ is the set of vector-functions $\bs v(\bs x,t)\in S_n(\Omega_t)$ such that for some indexes $k\in K$ in the subset $K\subset\{1,\ldots,n\}$ one has
$$
\overline{v}_k(t)=0,\quad k\in K,\,t\geq 0,
$$
where
$$
\overline{v}_k(t)=\int_\Omega v_k(\bs x,t)\D \bs x.
$$

The interior set $\Int S_n(\Omega_t)$ is the set of functions $\bs v(\bs x,t)\in S_n(\Omega_t)$ such that
$$
\overline{v}_k(t)>0,\quad k=1,\ldots,n,
$$
for any $t\geq 0$.
\end{definition}
Note that due to \eqref{eq1:4} we have for the elements of $\bd S_n(\Omega_t)$
$$
\sum_{k\notin K}\int_\Omega v_k(\bs x,t)\D \bs x=1.
$$

Analogously, for the system \eqref{eq1:12} and the standard simplex $S_n$ its boundary and interior sets are defined, respectively, as the set which has at least one coordinate $w_k(t)=0$ and the set for which all the coordinates $w_k(t)>0$ for $k=1,\ldots,n$. Note that these sets are invariant for \eqref{eq1:12}.

\begin{remark} Since $v_k(\bs x,t)\in W^{r,2}(\Omega),\,r=1,2$ for any fixed $t\geq 0$, this implies that $v_k(\bs x,t)$ coincide with continuous functions almost everywhere. Therefore, $\overline{v}_k(t)=0$ implies that $v_k(\bs x,t)=0$ almost everywhere in $\Omega$.
\end{remark}

\begin{remark}\label{rm1:3}
For any element $\bs v(\bs x,t)\in\bd S_n(\Omega_t)$ an element $\bs w(t)\in \bd S_n$ can be identified. Indeed, we can always put $\bs w(t)=\overline{\bs v}(t)$.
\end{remark}

\section{Stationary solutions to the distributed replicator equation}
The stationary solutions to the problem \eqref{eq1:4}--\eqref{eq1:8} are determined by the following time independent system of equations
\begin{equation}\label{eq2:1a}
    u_k(\bs x)\Bigl(\bigl(\bs{Au}(\bs x)\bigr)_k-\overline{f}^{sp}+d_k\Delta u_k(\bs x)\Bigr)=0,\quad k=1,\ldots,n,
\end{equation}
with the boundary conditions
\begin{equation}\label{eq2:1b}
    \left.\frac{\partial u_k(\bs x)}{\partial \nu}\right|_{\bs x\in \Gamma}=0,\quad k=1,\ldots,n,
\end{equation}
and
\begin{equation}\label{eq2:1c}
    \overline{f}^{sp}=\int_\Omega \bigl(\langle\bs{Au}(\bs x),\bs{u}(\bs x)\rangle-\sum_{k=1}^{n}d_k\|\nabla u_k(\bs x)\|^2\bigr)\D \bs x.
\end{equation}
Solutions to \eqref{eq2:1a}--\eqref{eq2:1c} will be sought both in the set $\Int S_n(\Omega)$ and in the set $\bd S_n(\Omega)$. Together with the solutions to \eqref{eq2:1a}--\eqref{eq2:1c}, consider the stationary points of \eqref{eq1:12}, which are given as the solutions to
\begin{equation}\label{eq2:2}
    w_k\bigl((\bs{Aw})_k-\langle\bs{Aw},\bs{w}\rangle\bigr)=0,\quad \bs w\in S_n,\quad k=1,\ldots, n.
\end{equation}

Consider an auxiliary eigenvalue problem
\begin{equation}\label{eq2:3}
    \Delta \psi(\bs{x})+\lambda \psi(\bs{x})=0,\quad \bs{x}\in\Omega,\quad \partial_\nu\psi|_{\bs{x}\in\Gamma}=0.
\end{equation}
The eigenfunction system of \eqref{eq2:3} is given by $\psi_0(\bs{x})=1,\,\{\psi_i(\bs{x})\}_{i=1}^\infty$ and forms a complete system in the Sobolev space $W^{r,2}(\Omega),\,r=1,2$ (e.g., \cite{mikhlin1964vmm}), additionally
\begin{equation}\label{eq2:4}
    \int_\Omega \psi_i(\bs{x})\psi_j(\bs{x})\D\bs{x}=\delta_{ij},
\end{equation}
where $\delta_{ij}$ is the Kronecker symbol. The corresponding eigenvalues satisfy the condition
$$
0=\lambda_0<\lambda_1\leq \lambda_2\leq\ldots\leq\lambda_i\leq\ldots,\qquad \lim_{i\to\infty}\lambda_i=+\infty.
$$

It it convenient to introduce the following definition.

\begin{definition} We shall call the diffusion coefficient $d_k$ of the system \eqref{eq2:1a} $\mu$-resonant if there exists an eigenvalue $\lambda_{s_k}$ of \eqref{eq2:3} such that
\begin{equation}\label{eq2:5}
    d_k=\frac{\mu}{\lambda_{s_k}}\,.
\end{equation}
\end{definition}
By definition $\mu$ has to a be a positive constant. From the definition it follows that if
\begin{equation}\label{eq2:6}
    d_{\min}>\frac{\mu}{\lambda_1}\,,\quad d_{\min}=\min\{d_1,\ldots,d_n\}
\end{equation}
for a given $\mu$, then all the diffusion coefficients are not $\mu$-resonant.

\begin{theorem}\label{th2:1} Let $\bs A$ have at least one real eigenvalue and let $\mu$ be the maximal eigenvalue of $\bs A$. Assume also that system \eqref{eq2:2} has an isolated solution $\bs w\in\Int S_n$. If condition \eqref{eq2:6} holds then all the stationary solutions $\bs u(\bs x)\in \Int S_n(\Omega)$ of the distributed system \eqref{eq1:4}--\eqref{eq1:8} are spatially uniform and coincide with the interior rest point $\bs w\in \Int S_n$ of \eqref{eq1:12}.
\end{theorem}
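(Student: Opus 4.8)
The plan is to show that any interior stationary solution must be a superposition of the Neumann eigenfunctions of \eqref{eq2:3} in which only the constant mode survives. First I would address positivity: since $\bs{u}\in\Int S_n(\Omega)$ each average $\overline{u}_k>0$, so $u_k\not\equiv0$, and I would argue (via the strong maximum principle/Hopf lemma applied to the scalar elliptic equation satisfied by $u_k$ on the set where it is positive) that in fact $u_k(\bs{x})>0$ throughout $\Omega$. This positivity is what lets me cancel $u_k$ in the $k$-th equation of \eqref{eq2:1a} and obtain the genuinely pointwise relation
$$d_k\Delta u_k(\bs{x}) + (\bs{A}\bs{u}(\bs{x}))_k = \overline{f}^{sp}, \qquad k=1,\dots,n,$$
valid on all of $\Omega$ with the Neumann condition \eqref{eq2:1b}. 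Here $\overline{f}^{sp}$ is an a priori unknown constant, to be fixed by $\bs{u}$ through \eqref{eq2:1c}; for the linear analysis that follows it suffices that it is constant.

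Next I would expand each component in the orthonormal eigenbasis $\{\psi_i\}_{i\ge0}$ of \eqref{eq2:3}, writing $u_k=\sum_{i\ge0}c_k^{(i)}\psi_i$ with $\bs{c}^{(i)}=(c_1^{(i)},\dots,c_n^{(i)})$; recall $\psi_0\equiv1$ and $\Delta\psi_i=-\lambda_i\psi_i$, so $c_k^{(0)}=\overline{u}_k$. Substituting into the displayed relation and using orthonormality \eqref{eq2:4} to match coefficients of each $\psi_i$ splits the problem into independent linear systems: the mode $i=0$ gives $(\bs{A}\overline{\bs{u}})_k=\overline{f}^{sp}$ for every $k$, i.e.\ $\bs{A}\overline{\bs{u}}=\overline{f}^{sp}\bs{1}$ with $\bs{1}=(1,\dots,1)$, while every mode $i\ge1$ gives $(\bs{A}-\lambda_i\bs{D})\bs{c}^{(i)}=\bs{0}$, where $\bs{D}=\diag(d_1,\dots,d_n)$.

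For the constant mode, $\bs{A}\overline{\bs{u}}=\overline{f}^{sp}\bs{1}$ together with $\sum_k\overline{u}_k=1$ says precisely that $\overline{\bs{u}}$ is an interior rest point of \eqref{eq1:12}, the common value $\overline{f}^{sp}$ playing the role of $\langle\bs{Aw},\bs{w}\rangle$. Since interior rest points of \eqref{eq2:2} satisfy the \emph{linear} conditions ``$(\bs{Aw})_k$ independent of $k$,'' their set is an affine slice of the simplex, so an \emph{isolated} such point is the only one; I conclude $\overline{\bs{u}}=\bs{w}$ and $\overline{f}^{sp}=\langle\bs{Aw},\bs{w}\rangle$. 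It then remains to kill the higher modes. For $i\ge1$ a nonzero $\bs{c}^{(i)}$ forces $\lambda_i$ to be a real generalized eigenvalue of the pencil $(\bs{A},\bs{D})$, i.e.\ $\bs{A}\bs{c}^{(i)}=\lambda_i\bs{D}\bs{c}^{(i)}$; pairing with $\bs{c}^{(i)}$ and using the quadratic-form bound discussed below yields
$$\lambda_i=\frac{\langle\bs{A}\bs{c}^{(i)},\bs{c}^{(i)}\rangle}{\langle\bs{D}\bs{c}^{(i)},\bs{c}^{(i)}\rangle}\le\frac{\mu}{d_{\min}}.$$
But \eqref{eq2:6} reads $d_{\min}>\mu/\lambda_1$, i.e.\ $\mu/d_{\min}<\lambda_1\le\lambda_i$ for every $i\ge1$, a contradiction. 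Hence $\bs{c}^{(i)}=\bs{0}$ for all $i\ge1$, so $\bs{u}\equiv\overline{\bs{u}}=\bs{w}$ is spatially uniform.

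The main obstacle is the estimate $\langle\bs{A}\bs{c},\bs{c}\rangle\le\mu\langle\bs{c},\bs{c}\rangle$, which is exactly what pins $\lambda_i$ below $\lambda_1$. This is immediate when $\bs{A}$ is symmetric, where $\mu$ is simultaneously the maximal eigenvalue and the maximum of the numerical range; for a non-symmetric $\bs{A}$ one must in principle replace $\mu$ by the largest eigenvalue of the symmetric part $(\bs{A}+\bs{A}^{\top})/2$, and I would check whether the stated hypothesis (a real maximal eigenvalue $\mu$ of $\bs{A}$) already secures the bound, or else restrict to the symmetric case. A secondary point I would not gloss over is the pointwise positivity used to pass from \eqref{eq2:1a} to the divided equation: without it $u_k$ could vanish on a set of positive measure where the bracketed factor need not vanish, and the modal decoupling would break down.
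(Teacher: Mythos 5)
Your proof follows essentially the same route as the paper's: expand $\bs u$ in the Neumann eigenbasis of \eqref{eq2:3}, read off $(\bs{A\overline{u}})_1=\cdots=(\bs{A\overline{u}})_n=\overline{f}^{sp}$ from the constant mode (which, together with isolation of $\bs w$, forces $\overline{\bs u}=\bs w$), and kill every higher mode by showing that $(\bs A-\lambda_s\bs D)\bs c^s=\bs 0$, $\bs D=\diag(d_1,\ldots,d_n)$, admits only the trivial solution under \eqref{eq2:6}. The two obstacles you flag are real, but they are equally present in the paper's own argument, which divides \eqref{eq2:1a} by $u_k$ without comment and simply asserts that the systems \eqref{eq2:11} are nonsingular; the paper's intended justification for the latter is the absence of $\mu$-resonant diffusion coefficients rather than your numerical-range bound, which, as you correctly note, only yields the constant $\mu$ when $\bs A$ is symmetric (for non-symmetric $\bs A$ with unequal $d_k$, neither your pencil estimate nor the paper's bare assertion is airtight as stated, so your honest flagging of this step is a point in your favor rather than a defect relative to the source).
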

\begin{proof}
Let $\bs u(\bs x)\in\Int S_n(\Omega)$ be a solution to \eqref{eq2:1a}--\eqref{eq2:1c}. Then
\begin{equation}\label{eq2:7}
    \bigl(\bs{Au}(\bs x)\bigr)_k-\overline{f}^{sp}+d_k\Delta u_k(\bs x)=0,\quad \partial_\nu u_k=0,\quad k=1,\ldots,n.
\end{equation}
Let us look for a solution to \eqref{eq2:7} in the form of a series with the basis $\psi_0(\bs{x})=1,\,\{\psi_i(\bs{x})\}_{i=0}^\infty$ of solutions to \eqref{eq2:3}:
\begin{equation}\label{eq2:8}
    u_k(\bs x)=\overline{u}_k+U_k(\bs x),\quad U_k(\bs x)=\sum_{s=1}^\infty c_s^k\psi_s(\bs x),
\end{equation}
where
$$
\overline{u}_k=\int_\Omega \psi_0(\bs x)u_k(\bs x)\D \bs x=\int_\Omega u_k(\bs x)\D \bs x.
$$
Putting \eqref{eq2:8} into \eqref{eq2:7}, integrating through $\Omega$, and taking into account \eqref{eq2:4}, we find
\begin{equation}\label{eq2:9}
    (\bs{A\overline{u}})_1=\ldots=(\bs{A\overline{u}})_n=\overline{f}^{sp}.
\end{equation}
Then from \eqref{eq2:8} and \eqref{eq2:7} it follows that for each $k$
\begin{equation}\label{eq2:10}
    \bigl(\bs{AU}(\bs x)\bigr)_k=-d_k\Delta U_k(\bs x),\quad \bs{U}(\bs x)=\bigl(U_1(\bs x),\ldots, U_n(\bs x)\bigr),
\end{equation}
which implies
$$
d_k\sum_{s=1}^\infty c_s^k\lambda_s\psi_s(\bs x)=\sum_{j=1}^na_{kj}\sum_{s=1}^\infty c_s^j\psi_s(\bs x).
$$
Taking the inner product of the last equality with $\psi_s(\bs x)$ in $L^2(\Omega)$ and using \eqref{eq2:4}, we obtain that the coefficients $c_s^k$ have to be solutions of the following linear systems
\begin{equation}\label{eq2:11}
    (\bs A-d_k\lambda_s \bs I)\bs c^s=0,\quad \bs c^s=(c_s^1,\ldots,c_s^n)^\top,\quad s=1,2,\ldots
\end{equation}
where $\bs I$ is the identity matrix. If the assumptions of the theorem hold, then systems \eqref{eq2:11} have only trivial solutions $\bs c^s=0$, which implies that $u_k(\bs x)=\overline{u}_k$ for all $k$ and $\overline{f}^{sp}=\langle\bs{A\overline{u}},\bs{\overline{u}}\rangle$, therefore the solution to \eqref{eq2:9} coincides with the solution to \eqref{eq2:2}.
\end{proof}

The results of Theorem \ref{th2:1} can be generalized for the case when solutions to \eqref{eq2:1a}--\eqref{eq2:1c} are taken from $\bd S_n(\Omega)$. Let $\bs u(\bs x)\in\bd S_n(\Omega)$. Then there exists set $K\subset\{1,\ldots,n\}$ from Definition \ref{def1:1}. Denote $\bs A_K$ the matrix which is obtained from $\bs A$ by deleting the rows and columns with indexes from $K$.

\begin{corollary}
Let $\bs A_K$ have at leat one real eigenvalue and let $\mu$ be the maximal eigenvalue of $\bs A_K$. Also assume that there exists an isolated solution $\bs w\in\bd S_n$ to \eqref{eq2:2} with $\bs A=\bs A_K$. If condition \eqref{eq2:6} on the set of $k\notin K$ holds, then all the solutions $\bs u(\bs x)\in\bd S_n(\Omega)$ are spatially uniform and coincide with the solutions $\bs w\in \bd S_n$ to \eqref{eq1:12}.
\end{corollary}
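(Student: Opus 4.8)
The plan is to reduce the boundary case to the interior case already settled in Theorem \ref{th2:1}. Since $\bs u(\bs x)\in\bd S_n(\Omega)$, Definition \ref{def1:1} provides an index set $K\subset\{1,\ldots,n\}$ with $\overline{u}_k=0$ for $k\in K$; because each component lies in $W^{r,2}(\Omega)$ and therefore agrees almost everywhere with a continuous function, $\overline{u}_k=0$ forces $u_k(\bs x)=0$ a.e.\ in $\Omega$ for every $k\in K$ (this is exactly the content of the continuity remark following Definition \ref{def1:1}). Consequently the stationary equations \eqref{eq2:1a} hold trivially for $k\in K$, and all the spatial structure that survives is carried by the components indexed by $k\notin K$.

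Next I would show that the restricted vector-function $\bs u_K(\bs x):=(u_k(\bs x))_{k\notin K}$ solves precisely the interior stationary problem \eqref{eq2:1a}--\eqref{eq2:1c} built from the reduced matrix $\bs A_K$. Because $u_j=0$ a.e.\ for $j\in K$, for each $k\notin K$ one has $(\bs{Au}(\bs x))_k=\sum_{s\notin K}a_{ks}u_s(\bs x)=(\bs A_K\bs u_K(\bs x))_k$, so the surviving equations in \eqref{eq2:1a} become $(\bs A_K\bs u_K(\bs x))_k-\overline{f}^{sp}+d_k\Delta u_k(\bs x)=0$. The same cancellation applies in \eqref{eq2:1c}: the quadratic form collapses to $\langle\bs A_K\bs u_K,\bs u_K\rangle$ and the gradient terms with $k\in K$ vanish (since $\nabla u_k=0$ a.e.\ there), so $\overline{f}^{sp}$ is exactly the mean fitness of the reduced system, while \eqref{eq1:11} collapses to $\sum_{k\notin K}\int_\Omega u_k\,\D\bs x=1$. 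Thus $\bs u_K\in\Int S_{n-|K|}(\Omega)$ and is a stationary solution of the distributed replicator equation associated with $\bs A_K$.

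It then remains to apply Theorem \ref{th2:1} verbatim to this reduced problem. The auxiliary eigenvalue problem \eqref{eq2:3} depends only on $\Omega$, so the spectrum $\{\lambda_i\}$ is unchanged, and the non-resonance hypothesis \eqref{eq2:6} imposed on the set $k\notin K$ is exactly the condition needed for $\bs A_K$; the remaining assumptions of the corollary---that $\bs A_K$ has a real eigenvalue with maximal value $\mu$, and that \eqref{eq2:2} with $\bs A=\bs A_K$ has an isolated interior solution $\bs w$---are precisely the hypotheses of Theorem \ref{th2:1} transcribed to the reduced setting. The theorem yields $u_k(\bs x)=\overline{u}_k$ for $k\notin K$, which together with $u_k\equiv 0$ for $k\in K$ gives a spatially uniform solution; using Remark \ref{rm1:3} to identify $\overline{\bs u}$ with an element of $\bd S_n$, this coincides with the boundary rest point $\bs w\in\bd S_n$ of \eqref{eq1:12}.

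The only point requiring care---and hence the main, if modest, obstacle---is verifying that the reduction is exact rather than formal: one must confirm that deleting the zero components does not alter the linear systems \eqref{eq2:11} governing the Fourier coefficients, i.e., that the off-diagonal block of $\bs A$ coupling surviving indices to $j\in K$ never feeds back into the equations for the $k\notin K$. Since those $u_j$ vanish identically, this feedback is absent and the relevant systems reduce to $(\bs A_K-d_k\lambda_s\bs I)\bs c^s=0$; the maximality of $\mu$ as an eigenvalue of $\bs A_K$ together with \eqref{eq2:6} on $k\notin K$ forces $\bs c^s=0$ for all $s\geq 1$, exactly as in the proof of Theorem \ref{th2:1}. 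No genuinely new analytic difficulty arises beyond the interior case.
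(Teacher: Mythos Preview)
Your proposal is correct and is essentially the approach the paper has in mind: the paper's entire proof is the single sentence ``The proof follows the steps of the proof of Theorem \ref{th2:1},'' and your reduction---restricting to the components $k\notin K$, observing that the resulting system is exactly the interior stationary problem for $\bs A_K$, and then invoking Theorem \ref{th2:1}---is precisely how one makes that sentence rigorous. The only minor stylistic difference is that you apply Theorem \ref{th2:1} as a black box to the reduced system, whereas the paper's phrasing suggests re-running the Fourier-expansion argument directly on the surviving components; these are the same computation.
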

The proof follows the steps of the proof of Theorem \ref{th2:1}.

\begin{corollary}\label{corr2:5}Let $\bs A$ have at least one real eigenvalue and let $\mu$ be the maximal eigenvalue of $\bs A$. Assume also that system \eqref{eq2:2} has an isolated solution $\bs w\in\Int S_n$. If problem \eqref{eq2:1a}--\eqref{eq2:1c} possesses $\mu$-resonant diffusion coefficients, then it has infinitely many spatially nonuniform solutions, whose mean integral values coincide with the solution to \eqref{eq2:2}.
\end{corollary}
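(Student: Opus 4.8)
The plan is to reverse the mechanism behind Theorem~\ref{th2:1}: there the systems \eqref{eq2:11} had only the trivial solution, forcing $\bs U(\bs x)\equiv 0$; here the resonance makes exactly one of these systems degenerate, and its kernel generates a one-parameter family of genuinely nonuniform stationary states. First I would reuse the spectral decomposition $u_k(\bs x)=\overline u_k+U_k(\bs x)$, $U_k=\sum_{s\geq 1}c^k_s\psi_s$, from the proof of Theorem~\ref{th2:1}. Integrating \eqref{eq2:7} over $\Omega$ still yields \eqref{eq2:9}, so the mean values are pinned to an interior rest point; by the isolatedness hypothesis $\overline{\bs u}=\bs w$, the solution of \eqref{eq2:2}. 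The fluctuation coefficients must satisfy \eqref{eq2:11}, which I read as the generalized eigenvalue relation $\bs A\bs c^s=\lambda_s\bs D\bs c^s$ with $\bs D=\diag(d_1,\dots,d_n)$.

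Next I would build the family explicitly. Let $\mu$ be the maximal (real) eigenvalue of $\bs A$ with eigenvector $\bs p$, and let $s$ be a mode at which the coefficients are $\mu$-resonant, i.e. $d_k\lambda_s=\mu$, so that $\bs A-\lambda_s\bs D=\bs A-\mu\bs I$ is singular and $\bs p\in\ker(\bs A-\lambda_s\bs D)$. Taking $\bs c^s=t\bs p$, all other $\bs c^{s'}=0$, and $\overline{\bs u}=\bs w$, I set
\[
\bs u(\bs x)=\bs w+t\,\bs p\,\psi_s(\bs x),\qquad t\in\R .
\]
The fluctuation part of \eqref{eq2:7} then reads $(\bs A\bs U)_k+d_k\Delta U_k=(\mu-d_k\lambda_s)\,t p_k\psi_s=0$, which vanishes precisely because of the resonance, while the mean part is \eqref{eq2:9}.

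The step I expect to be the real obstacle is the self-consistency of the nonlinear mean fitness $\overline f^{sp}$ in \eqref{eq2:1c}: I must check that the value produced by the mean equation \eqref{eq2:9}, namely $\langle\bs A\bs w,\bs w\rangle$, agrees with the integral functional evaluated on the whole family, independently of $t$. Expanding \eqref{eq2:1c} on $\bs u=\bs w+t\bs p\psi_s$, the cross terms integrate to zero because $\int_\Omega\psi_s\D\bs x=0$, and the two surviving quadratic contributions cancel: the added fitness term is $\langle\bs A(t\bs p),t\bs p\rangle=\lambda_s\langle\bs D(t\bs p),t\bs p\rangle=\lambda_s\sum_k d_k(tp_k)^2$ by the eigen-relation, while the added gradient energy is $\sum_k d_k\|\nabla(tp_k\psi_s)\|^2=\lambda_s\sum_k d_k(tp_k)^2$ since $\int_\Omega\|\nabla\psi_s\|^2\D\bs x=\lambda_s$. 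Hence $\overline f^{sp}=\langle\bs A\bs w,\bs w\rangle$ for every $t$, confirming that each member of the family solves \eqref{eq2:1a}--\eqref{eq2:1c}. (Note that this cancellation in fact uses only $\bs A\bs c^s=\lambda_s\bs D\bs c^s$, so it persists for any kernel vector of the degenerate pencil, not merely for $\bs p$.)

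Finally I would collect the conclusions: since $\psi_s$ is nonconstant, $\bs u$ is spatially nonuniform for every $t\neq 0$, and $t$ ranging over a neighborhood of $0$ yields infinitely many such solutions; $\int_\Omega\psi_s\D\bs x=0$ gives $\overline{\bs u}=\bs w$, so their mean integral values coincide with the solution of \eqref{eq2:2}. Membership in $\Int S_n(\Omega)$ is immediate for the mean values $\overline u_k=w_k>0$; if pointwise positivity of $\bs u(\bs x)$ is required, restricting $|t|$ to be small keeps $u_k(\bs x)>0$ while still leaving infinitely many admissible solutions.
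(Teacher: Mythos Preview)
Your argument is correct and follows the same route as the paper: both use the spectral ansatz \eqref{eq2:8}, read \eqref{eq2:11} as the pencil relation $\bs A\bs c^s=\lambda_s\bs D\bs c^s$, and show that the fluctuation contributions to $\overline f^{\,sp}$ cancel, forcing $\overline f^{\,sp}=\langle\bs A\overline{\bs u},\overline{\bs u}\rangle$ and hence $\overline{\bs u}=\bs w$. The only stylistic difference is that the paper obtains the cancellation abstractly---multiplying \eqref{eq2:10} by $U_k$ and integrating over $\Omega$ yields the single identity \eqref{eq2:12}, valid for any (possibly multi-mode) fluctuation satisfying \eqref{eq2:10}---whereas you compute it concretely on a single mode via $\int_\Omega\|\nabla\psi_s\|^2\,\D\bs x=\lambda_s$. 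Your parenthetical remark that the cancellation needs only $\bs A\bs c^s=\lambda_s\bs D\bs c^s$ is exactly the content of \eqref{eq2:12}, so the two presentations coincide.

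One small caveat: your construction of the nontrivial kernel vector assumes all $d_k$ hit the \emph{same} Laplace mode, i.e.\ $d_k\lambda_s=\mu$ for every $k$, so that $\bs A-\lambda_s\bs D=\bs A-\mu\bs I$. The paper's phrasing of ``$\mu$-resonant diffusion coefficients'' is loose on this point as well; your restriction to a common resonant mode is the natural reading under which $\ker(\bs A-\lambda_s\bs D)$ is guaranteed nontrivial, and it matches what is actually used downstream (e.g.\ in Example~3.6).
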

\begin{proof}
In this case systems \eqref{eq2:11} have infinitely many solutions. If each of equalities in \eqref{eq2:10} is multiplied by $U_k(\bs x)$ and integrated over $\Omega$, then we obtain
\begin{equation}\label{eq2:12}
\int_\Omega \langle \bs{AU}(\bs x),\bs{U}(\bs x)\rangle\D \bs x=\sum_{k=1}^n d_k\int_\Omega \langle \nabla U_k(\bs x),\nabla U_k(\bs x)\rangle\D \bs x.
\end{equation}
From the representation of $\overline{f}^{sp}$ in \eqref{eq2:1c} and series \eqref{eq2:8} we have
\begin{equation}\label{eq2:13}
    \langle \bs{A\overline{u}},\bs{\overline{u}}\rangle=\overline{f}^{sp}.
\end{equation}
The equality \eqref{eq2:13} together with \eqref{eq2:9} imply that $\bs{\overline{u}}=\bs{w}$, where $\bs w$ solves \eqref{eq2:2}.
\end{proof}


\begin{example} Consider stationary solutions when $\bs A$ is a circulant matrix:
$$
\bs A=\begin{bmatrix}
        a_1 & a_2 & \ldots & a_{n-1} & a_n \\
        a_n & a_1 & \ldots & a_{n-2} & a_{n-1} \\
        \ldots &  &  &  &  \\
        a_2 & a_3 & \ldots & a_n & a_1 \\
      \end{bmatrix}.
$$
If $\mu=\sum_{i=1}^na_i>0$ then, according to Corollary \ref{corr2:5}, there are infinitely many spatially nonhomogeneous solutions to \eqref{eq2:1a}--\eqref{eq2:1c} if, e.g.,
$$
d_{\max}=\frac{\mu}{\lambda_1}\,,\quad d_{\max}=\max\{d_1,\ldots,d_n\}.
$$
If $\mu<0$, then the number of stationary solutions is finite and they coincide with the solutions to \eqref{eq2:2} corresponding to the replicator equation \eqref{eq1:12}.
\end{example}

Before closing this section, we would like to stress an interesting feature of the problem \eqref{eq2:1a}--\eqref{eq2:1c}: It is possible to find such $\bs u(\bs x)\in \Int S_n(\Omega)$ satisfying the boundary condition \eqref{eq2:1b}, that on some domain $\Omega_k\subset\Omega$ the function $u_k(\bs x)\equiv 0$ when $\bs x\in\Omega_k$, whereas on $\Omega\setminus\Omega_k$ $u_k(\bs x)$ satisfies \eqref{eq2:1a} and \eqref{eq2:1c}. We illustrate this assertion with an example.

\begin{example}Consider an autocatalytic system on $\Omega=(0,1)$. This means that the matrix $\bs A$ is diagonal, $\bs A=\diag(a_1,\ldots,a_n)$, and the stationary solutions solve
\begin{align*}
u_k(x)&\Bigl(d_k\frac{d^2u_k}{dx^2}(x)+a_ku_k(x)-\overline{f}^{sp}\Bigr)=0,\quad a_k>0,\quad k=1,\ldots,n,\\
u'_k(0)&=u'_k(1)=0,\quad \overline{f}^{sp}=\sum_{k=1}^n\int_0^1\Bigl(a_ku_k^2(x)-d_k\bigl(u'_k(x)\bigr)^2\Bigr)\D x.
\end{align*}
Assume that
$$\sqrt{\frac{a_1}{d_1}}=m\pi$$
for some positive integer $m\geq 1$, whereas for the rest of the parameters $a_k/d_k<\pi$ for $k=2,\ldots,n$. Then solutions have the form
$$
u_1(x)=c_1\cos m\pi x+\frac{\overline{f}^{sp}}{a_1}\,,\quad u_k(x)=\frac{\overline{f}^{sp}}{a_k}\,,\quad k=2,\ldots,n,
$$
and since $\bs u(x)\in S_n(0,1)$ then the following condition
$$
\overline{f}^{sp}=\frac{1}{\sum_{k=1}^na_k^{-1}}
$$
should hold. This can be checked directly, since
$$
\overline{f}^{sp}=\int_0^1\Bigl(a_1u_1^2(x)-d_1 \bigl(u'_1(x)\bigr)^2+\sum_{k=2}^na_ku_k^2(x)\Bigr)\D x=\bigl(\overline{f}^{sp}\bigr)^2\sum_{k=1}^n\frac{1}{a_k}\,,
$$
which yields the required equality. To guarantee that $u_1(x)$ is nonnegative, it is enough to require $|c_1|\leq \overline{f}^{sp}/a_1$. Since $c_1$ is arbitrary, we found infinitely many stationary solutions (cf. Corollary \ref{corr2:5}). Apart from this set, as it can be directly verified, the following choices for $u_1(x)$ are also solutions:
\begin{align*}
u_1(x)&=\begin{cases}\frac{\overline{f}^{sp}}{a_1}(1+\cos m\pi x), &0<x<\frac{1}{m}\\
0,&\frac{1}{m}\leq x< 1
\end{cases}\\
u_1(x)&=\begin{cases}0, &0<x\leq \frac{m-1}{m}\\
 \frac{\overline{f}^{sp}}{a_1}(1+(-1)^m\cos m\pi x), &\frac{m-1}{m}<x<1
\end{cases}\\
u_1(x)&=\begin{cases}0, &0<x\leq \frac{m-2}{2m}\\
 \frac{\overline{f}^{sp}}{a_1}(1+(-1)^m\cos m\pi x), &\frac{m-2}{2m}<x<\frac{m+2}{2m}\\
 0,&\frac{m+2}{2m}\leq x<1
\end{cases}
\end{align*}
The list of examples can be continued. Moreover, similar examples can be constructed in case when $\Omega$ is a rectangular area in $\R^2$ or in $\R^3$. The key conditions for such solutions to appear is the existence of $\mu$-resonant diffusion coefficients.
\end{example}

\section{Stability of the stationary solutions}

\begin{theorem}\label{th3:1}Let $\mu$ be the maximal real part of the eigenvalues of matrix $\bs A$.  Assume also that system \eqref{eq2:2} has an isolated solution $\bs w\in\Int S_n$. If condition \eqref{eq2:6} holds then the asymptotic stability (or instability) of the interior rest point $\bs w\in\Int S_n$ of the replicator equation \eqref{eq1:12} implies asymptotic stability (or instability) of the interior stationary solution to \eqref{eq1:4}--\eqref{eq1:8}.
\end{theorem}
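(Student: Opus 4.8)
The plan is to establish Theorem~\ref{th3:1} by linearizing the distributed system \eqref{eq1:7} about its spatially uniform interior stationary solution and by decomposing the linearized dynamics along the eigenfunction basis \eqref{eq2:3}. Under condition \eqref{eq2:6}, Theorem~\ref{th2:1} already guarantees that the only stationary solution in $\Int S_n(\Omega)$ is the constant state $\bs u(\bs x)\equiv\bs w$, so the linearization is performed about a genuine constant. Writing $\bs v(\bs x,t)=\bs w+\bs z(\bs x,t)$ with the tangency condition $\sum_{k=1}^n\int_\Omega z_k\D\bs x=0$ inherited from \eqref{eq1:4} and $\partial_\nu z_k|_\Gamma=0$, and using $(\bs{Aw})_k=\overline f^{sp}=\langle\bs{Aw},\bs w\rangle$ for every $k$, I would retain only first-order terms. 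Because $\nabla w_k=0$, the quadratic gradient contribution to $f^{sp}$ is of second order and drops out, leaving
\begin{equation*}
\pdt z_k=w_k\Bigl((\bs{Az})_k-\langle\bs A\overline{\bs z},\bs w\rangle+d_k\Delta z_k\Bigr),\qquad \overline{\bs z}=\int_\Omega\bs z\D\bs x .
\end{equation*}

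Next I would substitute $z_k(\bs x,t)=\sum_{s=0}^\infty\xi_s^k(t)\psi_s(\bs x)$ in the orthonormal basis \eqref{eq2:3}--\eqref{eq2:4}, with $\psi_0\equiv1$, and project onto each $\psi_s$. Since $\int_\Omega\psi_s\D\bs x=0$ for $s\ge1$, both the nonlocal term $\langle\bs A\overline{\bs z},\bs w\rangle$ and the tangency constraint feed only into the mode $s=0$, while $\Delta\psi_s=-\lambda_s\psi_s$ turns diffusion into a diagonal multiplier. The modes then decouple into
\begin{align*}
\dot{\bs\xi}^0&=\bs W\bigl(\bs A\bs\xi^0-\mathbf 1\,\langle\bs A\bs\xi^0,\bs w\rangle\bigr),\qquad \textstyle\sum_k\xi_0^k=0,\\
\dot{\bs\xi}^s&=\bs W(\bs A-\lambda_s\bs D)\bs\xi^s,\qquad s\ge1,
\end{align*}
where $\bs W=\diag(w_1,\dots,w_n)$, $\bs D=\diag(d_1,\dots,d_n)$ and $\mathbf 1=(1,\dots,1)^\top$. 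The decisive observation is that the $s=0$ block is \emph{exactly} the linearization of the non-distributed replicator equation \eqref{eq1:12} at $\bs w$, restricted to the tangent space $\{\sum_k\xi_0^k=0\}$; hence its spectrum governs the stability of $\bs w$ in $\Int S_n$. This at once yields the instability half: if $\bs w$ is unstable for \eqref{eq1:12}, mode $s=0$ already produces an eigenvalue with positive real part whose spatially uniform eigenfunction is admissible, so the distributed solution is unstable.

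For the stability half I would show that condition \eqref{eq2:6} damps every higher mode, placing the whole spectrum in $\{\Re\sigma<0\}$ with a uniform gap. For an eigenpair $\bs W(\bs A-\lambda_s\bs D)\bs\xi=\sigma\bs\xi$ I would pass to the similar matrix $\bs W^{1/2}(\bs A-\lambda_s\bs D)\bs W^{1/2}$, pair the relation $(\bs A-\lambda_s\bs D)\bs\xi=\sigma\bs W^{-1}\bs\xi$ with $\overline{\bs\xi}^\top$, and take real parts:
\begin{equation*}
\Re\sigma\,\sum_{k=1}^n\frac{|\xi_k|^2}{w_k}=\overline{\bs\xi}^\top\frac{\bs A+\bs A^\top}{2}\bs\xi-\lambda_s\sum_{k=1}^n d_k|\xi_k|^2\le\bigl(\rho-\lambda_1 d_{\min}\bigr)\|\bs\xi\|^2,
\end{equation*}
where $\rho$ is the largest eigenvalue of the symmetric part of $\bs A$; as $\lambda_s\ge\lambda_1$ and $d_k\ge d_{\min}$, condition \eqref{eq2:6} forces $\Re\sigma<0$. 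Completeness of $\{\psi_s\}$ in $W^{r,2}(\Omega)$ ensures the modes exhaust all admissible perturbations, and since $\lambda_s\to\infty$ the higher eigenvalues tend to $-\infty$. I would then invoke the principle of linearized stability for the analytic semigroup generated by the diagonal sectorial diffusion operator perturbed by the bounded nonlocal zeroth-order term, on the manifold $S_n(\Omega)$, to upgrade this spectral bound to asymptotic stability of $\bs u\equiv\bs w$ in the norm of $B(\Omega_t)$.

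I expect the genuine obstacle to be twofold. First, the energy estimate naturally produces the largest eigenvalue $\rho$ of the \emph{symmetric part} $(\bs A+\bs A^\top)/2$, which in general dominates the spectral abscissa $\mu=\max\Re\,\operatorname{spec}\bs A$ appearing in \eqref{eq2:6}; the two coincide for symmetric $\bs A$, and reconciling them for a general $\bs A$ (or arguing that the factor $\bs W$ in the weighted form still yields the sharp threshold $\lambda_1 d_{\min}>\mu$) is the delicate algebraic point. Second, passing from a left-half-plane spectrum to nonlinear asymptotic stability requires a uniform spectral gap across all modes (immediate once the $s=0$ block is strictly stable and $\lambda_s\to\infty$) and verification that the nonlocal mean-fitness functional $f^{sp}$ and the integral constraint do not spoil sectoriality or the standard linearized-stability machinery.
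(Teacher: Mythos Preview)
Your approach is essentially identical to the paper's: expand the perturbation in the Neumann eigenbasis \eqref{eq2:3}, identify the $s=0$ block with the Jacobian of \eqref{eq1:12} at $\bs w$ restricted to $\{\sum_k\xi_0^k=0\}$, and obtain $\dot{\bs\xi}^s=\bs W(\bs A-\lambda_s\bs D)\bs\xi^s$ for $s\ge1$. The paper damps the higher modes with the Lyapunov function $V_s=\langle\bs W^{-1}\bs c^s,\bs c^s\rangle$, whose time derivative along \eqref{eq3:5} is exactly your paired quantity $2\,\overline{\bs\xi}^{\,\top}(\bs A-\lambda_s\bs D)\bs\xi$; so your energy estimate and the paper's Lyapunov argument are the same computation.

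Your first ``obstacle'' is genuine and is \emph{shared by the paper's own proof}. The derivative $\dot V_s=\langle\bs c^s,(\bs A+\bs A^\top-2\lambda_s\bs D)\bs c^s\rangle$ is controlled by the numerical abscissa $\rho=\lambda_{\max}\bigl(\tfrac12(\bs A+\bs A^\top)\bigr)$, not by the spectral abscissa $\mu=\max\Re\operatorname{spec}\bs A$; since $\rho\ge\mu$ with equality only for normal $\bs A$, condition \eqref{eq2:6} stated with $\mu$ is in general insufficient to force $\dot V_s<0$. The paper asserts the conclusion ``by the assumptions of the theorem'' without addressing this discrepancy, so you have not missed a trick---the gap is in the statement, and the argument as written actually proves the theorem with $\mu$ replaced by $\rho$ (which, incidentally, is how $\mu$ is used in Theorem~\ref{th4:1}, where it is taken to be the spectral radius). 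Your second concern, passing from the linear spectral picture to nonlinear asymptotic stability, is likewise not addressed in the paper, which simply concludes from the two linear observations; your plan to invoke a linearized-stability principle for the underlying semigroup is the honest way to complete that step.
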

\begin{proof} Theorem \ref{th2:1} yields that the interior stationary point coincide with the interior stationary point of \eqref{eq1:12}. Fix an $\varepsilon>0$ and look for a solution to \eqref{eq1:4}--\eqref{eq1:8} in the form
\begin{equation}\label{eq3:1}
    v_k(\bs x,t)=w_k+c_k^0(t)+\sum_{s=1}^\infty c_s^k\psi_s(\bs x),\quad k=1,\ldots,n,
\end{equation}
where $\psi_k(\bs x)$ are the eigenfunctions of the problem \eqref{eq2:3}, assuming that the initial conditions $v_k(\bs x,0)=\varphi_k(\bs x)$ satisfy
\begin{equation}\label{eq3:1a}
    \|\varphi_k(\bs x)-w_k\|_{L^2(\Omega)}<\delta,\quad k=1,\ldots,n.
\end{equation}

Plugging \eqref{eq3:1} into \eqref{eq1:7}, integrating over $\Omega$ and keeping only linear terms with respect to $c_k^0(t)$, we obtain the following system of linear equations:
\begin{equation}\label{eq3:2}
    \frac{dc_k^0(t)}{dt}=w_k\left[\bigl(\bs{Ac}^0\bigr)_k-\langle\bs A^\top\bs w,\bs c^0(t)\rangle-\langle\bs{Aw},\bs c^0(t)\rangle\right]+c_k^0\left[\bigl(\bs{Aw}\bigr)_k-\langle\bs{Aw},\bs w\rangle\right],
\end{equation}
where $k=1,\ldots,n$. By virtue of
$$
1=\sum_{k=1}^n\int_\Omega v_k(\bs x,t)\D \bs x=\sum_{k=1}^nw_k+\sum_{k=1}^nc_k^0(t)
$$
and
$$
\sum_{k=1}^n w_k=1,
$$
we have
\begin{equation}\label{eq3:3}
    \sum_{k=1}^nc_k^0(t)=0.
\end{equation}
Therefore, from \eqref{eq2:2} it follows that
$$
\langle\bs{Aw},\bs c^0(t)\rangle=\sum_{k=1}^n c_k^0(t)\bigl(\bs{Aw}\bigr)_k=f^l\sum_{k=1}^nc_k^0(t)=0,
$$
and
$$
\langle\bs{Aw},\bs w\rangle=\bigl(\bs{Aw}\bigr)_k.
$$
System \eqref{eq3:2} now reads
\begin{equation}\label{eq3:4}
   \frac{dc_k^0(t)}{dt}=w_k\left[\bigl(\bs{Ac}^0\bigr)_k-\langle\bs A^\top\bs w,\bs c^0(t)\rangle\right].
\end{equation}
The Jacobi matrix of system \eqref{eq3:4} coincides with the Jacobi matrix of \eqref{eq1:12} evaluated at the interior stationary point $\bs w\in\Int S_n$ given by \eqref{eq2:2}. Therefore, if $\bs w$ is asymptotically stable (unstable), then the trivial solution to \eqref{eq3:4} is also asymptotically stable (unstable).

Now we plug \eqref{eq3:1} into \eqref{eq1:7}, multiply consecutively by $\psi_i(\bs x)$ and integrate over $\Omega$; keeping only linear terms with respect to $\bs c^s(t)=(c_s^1(t),\ldots,c_s^n(t))$, we obtain the linear systems of equations of the form
\begin{equation}\label{eq3:5}
    \frac{d\bs c^s(t)}{dt}=\bs W(\bs A-\lambda_s \bs D)\bs c^s(t),\quad s=1,2,\ldots,
\end{equation}
where $\bs W=\diag(w_1,\ldots,w_n)$ and $\bs D=\diag(d_1,\ldots,d_k)$. By the assumptions of the theorem, the trivial solution to \eqref{eq3:5} is asymptotically stable. To prove this fact, it is sufficient to consider a Lyapunov function $V_s(t)=\langle\bs W^{-1}\bs c^s(t),\bs c^s(t)\rangle$ and use the properties of the spectrum of the problem \eqref{eq2:3}.

Putting together the last two observations we obtain that choosing $\delta$ small enough in \eqref{eq3:1a} yields that
$$
\|v_k(\bs x,t)-w_k\|_{L^2(\Omega)}<\varepsilon,\quad t\geq 0,\quad k=1,\ldots,n.
$$
This implies that the asymptotic stability (instability) of the interior solution to \eqref{eq1:7} coincides with asymptotic stability (instability) of the interior solution to \eqref{eq2:2}.
\end{proof}

\begin{corollary}
Let $\mu_K$ be the maximal real part of the eigenvalues of the matrix $\bs A_K$, which is obtained from $\bs A$ by removing rows and columns with the indexes from the set $K\subset\{1,\ldots,n\}$ and let $\bs w\in\bd S_n$ be a rest point of \eqref{eq1:12} such that $w_i=0$ if $i\in K$. Then if the condition \eqref{eq2:6} is satisfied with $\mu_K$ instead of $\mu$, then the asymptotic stability (instability) of $\bs w$ implies the asymptotic stability (instability) of these solutions as stationary points of \eqref{eq1:7}.
\end{corollary}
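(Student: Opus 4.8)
The plan is to mirror the proof of Theorem \ref{th3:1}, the only genuinely new ingredient being the degeneracy introduced by the vanishing coordinates $w_i=0$, $i\in K$. First I would invoke the boundary analogue of Theorem \ref{th2:1} to guarantee that, under condition \eqref{eq2:6} with $\mu_K$, every stationary solution of the distributed problem lying in $\bd S_n(\Omega)$ with support indices $K^c=\{1,\ldots,n\}\setminus K$ is spatially uniform and equals $\bs w$. This identifies the object whose stability we study and reduces the problem to linearizing \eqref{eq1:7} about the constant vector $\bs w$.

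Next I would substitute the ansatz \eqref{eq3:1}, $v_k=w_k+c_k^0(t)+\sum_{s\geq 1}c_s^k(t)\psi_s(\bs x)$, and split the coordinates into the \emph{transverse} block $i\in K$ (where $w_i=0$) and the \emph{internal} block $i\in K^c$. The decisive observation is that in the equation for a transverse coordinate the diffusion enters only through the product $v_i\,d_i\Delta v_i$, which is quadratic in the perturbation because $v_i$ is itself first order; likewise the variation of the mean fitness multiplies $v_i$ and is therefore second order. Hence to first order each transverse coordinate obeys the \emph{decoupled, diffusion-free} scalar law $\pdt v_i=((\bs{Aw})_i-\overline{f}^{sp})v_i$, with the \emph{same} rate $(\bs{Aw})_i-\overline{f}^{sp}=(\bs{Aw})_i-\langle\bs{Aw},\bs w\rangle$ for every spatial mode. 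This rate is exactly the transverse eigenvalue of the Jacobi matrix of \eqref{eq1:12} at $\bs w$.

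For the internal block the computation of Theorem \ref{th3:1} carries over to the reduced matrix $\bs A_K$: the uniform modes $c_i^0$, $i\in K^c$, reproduce the Jacobi matrix of the reduced replicator equation on the face, while the nonuniform modes satisfy $\dot{\bs c}^s=\bs W_K(\bs A_K-\lambda_s\bs D_K)\bs c^s$ with $\bs W_K=\diag(w_i:i\in K^c)$ now invertible. Consequently the Lyapunov function $V_s=\langle\bs W_K^{-1}\bs c^s,\bs c^s\rangle$ is well defined, and condition \eqref{eq2:6} with $\mu_K$ in place of $\mu$ forces $\dot V_s<0$ for all $s\geq 1$, exactly as in Theorem \ref{th3:1}. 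Because the transverse block feeds into the internal block but not conversely, the linearization splits, for each spatial mode, into a block lower-triangular matrix whose spectrum is the union of the transverse rates and the spectrum of the internal block; this is precisely the block structure of the ODE Jacobi matrix at the boundary point $\bs w$.

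Assembling the pieces: if $\bs w$ is asymptotically stable for \eqref{eq1:12}, then all transverse rates are negative and the reduced internal Jacobi matrix is stable, so all transverse modes decay at the ODE rate and all internal modes decay by the Lyapunov argument, giving asymptotic stability as a stationary solution of \eqref{eq1:7}; if $\bs w$ is unstable, then either a transverse rate is positive, making the corresponding transverse mode grow in \eqref{eq1:7}, or the uniform internal mode already grows, so \eqref{eq1:7} is unstable as well. The hard part, and the one place where Theorem \ref{th3:1} cannot be quoted verbatim, is the singularity of $\bs W$ on $\bd S_n(\Omega)$, which invalidates the Lyapunov function $\langle\bs W^{-1}\bs c^s,\bs c^s\rangle$ on the full space; the remedy is exactly the transverse/internal splitting above, resting on the fact that diffusion drops out of the transverse linearization at first order, so that transverse stability is decided by the spatially independent quantities $(\bs{Aw})_i-\overline{f}^{sp}$.
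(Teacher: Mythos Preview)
Your proposal is correct. The paper supplies no proof for this corollary at all; it is simply stated after Theorem~\ref{th3:1} with the tacit understanding that one ``repeats the argument on the face''. Your write-up does exactly this for the internal block $K^c$ (where $\bs W_K$ is invertible and the Lyapunov function $V_s=\langle \bs W_K^{-1}\bs c^s,\bs c^s\rangle$ and condition \eqref{eq2:6} with $\mu_K$ work verbatim), and it goes beyond the paper by making explicit what happens in the transverse directions $i\in K$, where a naive copy of Theorem~\ref{th3:1} fails because $\bs W$ is singular.

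The key observation you isolate --- that for $i\in K$ the terms $v_i\,d_i\Delta v_i$ and $v_i\,\delta f^{sp}$ are quadratic in the perturbation, so the linearized transverse dynamics are $\partial_t v_i=((\bs{Aw})_i-\overline f^{sp})v_i$ for \emph{every} spatial mode --- is precisely what is needed and is not spelled out in the paper. It explains why only $\mu_K$ (and not the full $\mu$) enters condition \eqref{eq2:6}: the diffusion coefficients $d_i$ for $i\in K$ never appear in the linearization, so no spectral condition on them is required. The block lower-triangular structure you describe then reduces the spectral question to the union of the ODE transverse eigenvalues and the spectra of $\bs W_K(\bs A_K-\lambda_s\bs D_K)$, matching the ODE Jacobian at $\bs w$ mode by mode. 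In short, your argument is the natural completion of what the paper leaves implicit, and the one genuinely new step (transverse modes are diffusion-blind at first order) is handled correctly.
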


For the following we introduce
\begin{definition}
A stationary solution $\bs u(\bs x)\in S_n(\Omega)$ to system \eqref{eq1:4}--\eqref{eq1:8} is called stable in the sense of the mean integral value, if for any $\varepsilon>0$ there exists a $\delta>0$ such that for any initial conditions \eqref{eq1:8} that satisfy
\begin{equation}\label{eq3:6}
    |\overline{\psi}_k-\overline{u}_k|<\delta,\quad k=1,\ldots, n,
\end{equation}
where
$$
\overline{\psi}_k=\int_\Omega \psi_k(\bs x)\D \bs x,\quad \overline{u}_k=\int_\Omega u_k(\bs x)\D \bs x,
$$
it follows that
\begin{equation}\label{eq3:7}
    |\overline{v}_k(t)-\overline{u}_k|<\varepsilon,\quad k=1,\ldots,n,\quad t>0,
\end{equation}
where
$$
\overline{v}_k(t)=\int_\Omega v_k(\bs x,t)\D \bs x,
$$
and $v_k(\bs x,t)$ are the solutions to \eqref{eq1:7} with the initial conditions $\psi_k(\bs x)$.
\end{definition}

The stability in the sense of the mean integral value follows from the usual Lyapunov stability, whereas the opposite is not true (see \cite{bratus2011}). If in \eqref{eq3:7} we addionally have that $|\overline{v}_k(t)-\overline{u}_k|\to 0$ when $t\to\infty$ then we shall call $\bs u(\bs x)$ \textit{asymptotically stable in the sense of the mean integral value}.

\begin{corollary}
Let the reaction--diffusion replicator equation \eqref{eq1:4}--\eqref{eq1:8} have $\mu$-resonant diffusion coefficients, where $\mu$ is the maximal real part of the eigenvalues of $\bs A$, then the asymptotic stability (instability) of $\bs w\in S_n$ of the problem \eqref{eq1:12} implies asymptotic stability (instability) of $\bs u(\bs x)$ of the problem \eqref{eq1:4}--\eqref{eq1:8} in the sense of the mean integral value.
\end{corollary}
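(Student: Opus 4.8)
The plan is to reuse the eigenfunction expansion from the proof of Theorem \ref{th3:1} but to project everything onto the zeroth, spatially constant, mode, since that is the only mode seen by the mean integral value. First I would expand a perturbed solution exactly as in \eqref{eq3:1}, writing $v_k(\bs x,t)=w_k+c_k^0(t)+\sum_{s=1}^\infty c_s^k(t)\psi_s(\bs x)$ with $\{\psi_s\}$ the eigenfunctions of \eqref{eq2:3} normalized by \eqref{eq2:4}. The elementary but decisive fact is that $\int_\Omega \psi_s(\bs x)\D\bs x=\int_\Omega\psi_s\psi_0\D\bs x=0$ for every $s\geq 1$, while $\int_\Omega\D\bs x=|\Omega|=1$. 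Integrating the expansion over $\Omega$ therefore collapses it to
\[
\overline{v}_k(t)=\int_\Omega v_k(\bs x,t)\D\bs x=w_k+c_k^0(t),
\]
so the mean integral value is governed \emph{exactly} by the single coefficient $c_k^0(t)$. Since Corollary \ref{corr2:5} guarantees that in the $\mu$-resonant case the mean integral values of the stationary solution satisfy $\overline{u}_k=w_k$, this yields the clean identity $\overline{v}_k(t)-\overline{u}_k=c_k^0(t)$.

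Next I would observe that the equation for $c_k^0(t)$ is precisely system \eqref{eq3:4} obtained in the proof of Theorem \ref{th3:1}, and that its derivation---plug \eqref{eq3:1} into \eqref{eq1:7}, integrate over $\Omega$, and keep linear terms---never invokes the non-resonance condition \eqref{eq2:6}. The zero-flux boundary condition \eqref{eq1:5} forces the diffusive contribution to the mean to vanish at first order, because $\int_\Omega\Delta\psi_s\D\bs x=-\lambda_s\int_\Omega\psi_s\D\bs x=0$ for $s\geq1$, and every linear contribution of the higher modes to $\int_\Omega v_k(\bs{Av})_k\D\bs x$ carries a factor $\int_\Omega\psi_s\D\bs x=0$. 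Consequently the linearized mean equation is closed in $\bs c^0(t)$ and coincides verbatim with \eqref{eq3:4}, whose Jacobi matrix is exactly the Jacobi matrix of the non-distributed replicator equation \eqref{eq1:12} evaluated at $\bs w$. The only role of \eqref{eq2:6} in Theorem \ref{th3:1} was to control the higher-mode systems \eqref{eq3:5}, which are irrelevant here.

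From this point the conclusion is immediate: asymptotic stability (respectively instability) of $\bs w$ as a rest point of \eqref{eq1:12} is, by definition, asymptotic stability (instability) of the trivial solution $\bs c^0\equiv 0$ of \eqref{eq3:4}, and through $\overline{v}_k(t)-\overline{u}_k=c_k^0(t)$ this is precisely asymptotic stability (instability) in the sense of the mean integral value. I would finish by translating the $\delta$--$\varepsilon$ estimate: a bound \eqref{eq3:6} on $|\overline{\psi}_k-\overline{u}_k|=|c_k^0(0)|$ propagates through the linear flow of \eqref{eq3:4} to the required bound \eqref{eq3:7} on $|\overline{v}_k(t)-\overline{u}_k|=|c_k^0(t)|$, with decay to zero in the asymptotically stable case.

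The main obstacle---and the reason this is only a statement about mean-integral stability rather than full Lyapunov stability---is exactly that resonance destroys the higher-mode argument of Theorem \ref{th3:1}. When $d_k=\mu/\lambda_{s_k}$, the matrix $\bs A-\lambda_{s_k}\bs D$ in \eqref{eq3:5} may possess an eigenvalue with nonnegative real part, so the Lyapunov function $V_s=\langle\bs W^{-1}\bs c^s,\bs c^s\rangle$ is no longer guaranteed to decrease and the spatial fluctuations $\bs c^s(t)$ need not decay. The entire point of passing to the mean integral value is that these problematic modes are annihilated by integration over $\Omega$; the subtlety to verify carefully is exactly the linear decoupling asserted above, namely that no higher mode feeds back into the equation for $c_k^0(t)$, so that the possibly growing spatial fluctuations remain invisible to $\overline{v}_k(t)$.
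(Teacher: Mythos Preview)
Your proposal is correct and follows essentially the same route as the paper's own proof: expand in the eigenfunctions of \eqref{eq2:3}, invoke Corollary~\ref{corr2:5} to identify $\overline{u}_k=w_k$, observe that the linearized equation for the zeroth coefficient $c_k^0(t)$ is precisely \eqref{eq3:4} and is independent of the non-resonance hypothesis \eqref{eq2:6}, and conclude via $\overline{v}_k(t)-\overline{u}_k=c_k^0(t)$. Your write-up is in fact more explicit than the paper's about why the higher modes do not feed into the $c_k^0$ equation at linear order and about why resonance obstructs only the full Lyapunov conclusion; these elaborations are correct and add clarity without changing the argument.
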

\begin{proof}
Let us look for the solution to \eqref{eq1:4}--\eqref{eq1:8} that satisfies condition \eqref{eq3:6} in the form
$$
v_k(\bs x,t)=\overline{u}_k+c_k^0(t)+\sum_{s=1}^\infty c_s^k\psi_s(\bs x),\quad k=1,\ldots,n,
$$
such that $|\overline{v}_k(t)-\overline{u}_k|=|c_k^0(0)|<\delta,$ for some $\delta>0$. Corollary \ref{corr2:5} yields that the mean integral values $\overline{u}_k$ of the spatially nonuniform solutions $u_k(\bs x)$ are the stationary points of the replicator equation \eqref{eq1:12}. Therefore, $\overline{u}_k=w_k$, where $\bs w$ solves \eqref{eq2:2}. Note that \eqref{eq3:3} holds, and for $c_k^0(t)$ we obtain the linear approximation \eqref{eq3:4}, therefore, if $\bs w$ is asymptotically stable, then $c_k^0(t)$ tend to $0$ and $\bs u(\bs x)$ is asymptotically stable in the sense of the mean integral value.
\end{proof}

\section{Replicator dynamics}

Prior to stating the main theorem here, we give a definition of persistence and recall the specific form of Poincar\'{e}'s inequality that we use.

\begin{definition} The replicator equation defined on the integral simplex $S_n(\Omega_t)$ is said to be persistent if the initial conditions \eqref{eq1:8} $\overline{\varphi}_k>0$ for $k=1,\ldots,n$ imply
\begin{equation}\label{eq4:1}
    \liminf_{t\to\infty} \overline{v}_k(t)>0,\quad k=1,\ldots,n.
\end{equation}
where
$$
\overline{v}_k(t)=\int_\Omega v_k(\bs x,t)\D \bs x.
$$
\end{definition}
For $v_k(\bs x,t),\,k=1,\ldots,n$ it means that they are not zero almost everywhere in $\Omega$.

We will use  Poincar\'{e}'s inequality in the following form. Let $g(\bs x)\in W^{r,2},\,r=1,2$. There exist nonnegative constants $c_1$ and $c_2$, which depend on the geometry of $\Omega$ and do not depend on $g(\bs x)$, such that
$$
\int_{\Omega}g^2(\bs x)\D \bs x\leq c_1\int_\Omega \|\nabla g(\bs x)\|^2\D\bs x+c_2\Bigl(\int_{\Omega} g(\bs x)\D \bs x\Bigr)^2.
$$
In the particular case when $\int_{\Omega} g(\bs x)\D \bs x=0$ we have
\begin{equation}\label{eq4:1a}
    \int_{\Omega}g^2(\bs x)\D \bs x\leq c_1\int_\Omega \|\nabla g(\bs x)\|^2\D\bs x,
\end{equation}
and $c_1=\lambda_1^{-1}$, where $\lambda_1$ is the minimal nonzero eigenvalue of the problem \eqref{eq2:3} (see \cite{rektorys1980variational}).
\begin{theorem}\label{th4:1}
Assume that
\begin{equation}\label{eq4:2}
    \lambda_1 d_{\min}\geq \mu,\quad d_{\min}=\{d_1,\ldots,d_n\},
\end{equation}
and $\mu$ is the spectral radius of $\bs A$. If there exists a vector $\bs p\in\Int S_n$ for which
\begin{equation}\label{eq4:3}
    \langle \bs{Aw},\bs p\rangle-\langle\bs{Aw},\bs w\rangle>0
\end{equation}
for all points $\bs w\in\bd S_n$, then system \eqref{eq1:4}--\eqref{eq1:8} is persistent.
\end{theorem}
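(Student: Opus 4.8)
The plan is to adapt Hofbauer's average Lyapunov function method to the integral simplex, taking as the candidate function the weighted product of the mean integral values
$$
P(t) = \prod_{k=1}^n \overline v_k(t)^{p_k},
$$
where $\bs p=(p_1,\ldots,p_n)\in\Int S_n$ is the vector furnished by hypothesis \eqref{eq4:3}. This $P$ is positive on $\Int S_n(\Omega_t)$ and vanishes on $\bd S_n(\Omega_t)$, and persistence \eqref{eq4:1} will follow once I show that the boundary is repelling, i.e. that $\Psi(t):=\frac{d}{dt}\log P(t)=\sum_{k=1}^n p_k\,\dot{\overline v}_k/\overline v_k$ is bounded below by a positive constant whenever the trajectory is close to $\bd S_n(\Omega)$.

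First I would compute $\dot{\overline v}_k$: integrating \eqref{eq1:7} over $\Omega$, using the zero-flux condition \eqref{eq1:5} and Green's identity exactly as in the derivation of \eqref{eq1:6}, gives $\dot{\overline v}_k=\int_\Omega v_k(\bs{Av})_k\,\D\bs x-f^{sp}\overline v_k-d_k\|\nabla v_k\|^2$. Then I would split each density into its mean and fluctuation, $v_k(\bs x,t)=\overline v_k(t)+\tilde v_k(\bs x,t)$ with $\int_\Omega\tilde v_k\,\D\bs x=0$, and substitute this together with \eqref{eq1:6} into $\Psi$. Because all terms linear in $\tilde v_k$ integrate to zero, the leading part of $\Psi$ collapses to the purely mean-field expression $\langle\bs{A\overline v},\bs p\rangle-\langle\bs{A\overline v},\bs{\overline v}\rangle$, which is precisely the left-hand side of \eqref{eq4:3} evaluated at $\bs{\overline v}$. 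By Remark \ref{rm1:3}, as the trajectory approaches $\bd S_n(\Omega)$ its mean $\bs{\overline v}$ approaches $\bd S_n$, so this mean-field term is bounded below by a positive constant there, by \eqref{eq4:3} and continuity.

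The remaining contributions are quadratic in the fluctuations. Expanding $\tilde v_k=\sum_{s\geq1}c_s^k\psi_s$ in the eigenbasis of \eqref{eq2:3} and using \eqref{eq2:4}, the fluctuation part that does not carry the factor $1/\overline v_k$ becomes $\sum_{s\geq1}\langle(\lambda_s\bs D-\bs A)\bs c^s,\bs c^s\rangle$ with $\bs D=\diag(d_1,\ldots,d_n)$, and this is nonnegative: for each $s\geq1$ one has $\langle(\lambda_s\bs D-\bs A)\bs c^s,\bs c^s\rangle\geq(\lambda_s d_{\min}-\mu)\|\bs c^s\|^2\geq(\lambda_1 d_{\min}-\mu)\|\bs c^s\|^2\geq0$ by $\lambda_s\geq\lambda_1$ and assumption \eqref{eq4:2}. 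This is exactly the point at which Poincar\'e's inequality \eqref{eq4:1a} (encoded in $\|\nabla\tilde v_k\|^2=\sum_s\lambda_s(c_s^k)^2\geq\lambda_1\|\tilde v_k\|^2$) and the threshold \eqref{eq4:2} make the diffusive damping dominate the destabilizing interaction term $\int_\Omega\langle\bs{A\tilde v},\bs{\tilde v}\rangle\,\D\bs x$.

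The hard part is the leftover group $\sum_k(p_k/\overline v_k)\sum_{s\geq1}c_s^k\bigl((\bs A-\lambda_s\bs D)\bs c^s\bigr)_k$, which carries the singular factor $1/\overline v_k$ and need not be sign-definite; controlling it near $\bd S_n(\Omega)$ is the crux. The key observation I would use is that the constraint $v_k\geq0$ forces $\tilde v_k\geq-\overline v_k$, so for the indices $k$ with $\overline v_k$ small the associated fluctuations are themselves small in mean, and combined with the negative diffusive contribution $-d_k\lambda_s(c_s^k)^2$ appearing inside each bracket this prevents the singular terms from overwhelming the strictly positive mean-field part, giving $\Psi(t)\geq\gamma>0$ in a neighborhood of the boundary. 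I would then invoke the average Lyapunov function theorem: $\log P(t)$ cannot decrease to $-\infty$ along forward orbits issuing from $\overline\varphi_k>0$, which yields $\liminf_{t\to\infty}\overline v_k(t)>0$ for every $k$ and hence persistence. The genuinely delicate step, beyond the algebra above, is justifying this average-Lyapunov argument in the present infinite-dimensional setting, where the boundary of the integral simplex and the relevant $\omega$-limit sets must be handled via the compactness of the Sobolev embeddings recalled in Section 2.
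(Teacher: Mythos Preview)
Your proposal contains a genuine gap, and it is precisely the one you flag as ``the hard part.'' The singular terms
\[
\sum_{k=1}^n \frac{p_k}{\overline v_k}\sum_{s\ge 1} c_s^k\bigl((\bs A-\lambda_s\bs D)\bs c^s\bigr)_k
\]
cannot be controlled the way you suggest. The observation $\tilde v_k\ge -\overline v_k$ bounds the fluctuations only from below, not in $L^2$; without uniform $L^\infty$ bounds on $v_k$ there is no a priori reason $\|\tilde v_k\|_{L^2}^2=O(\overline v_k)$. Moreover, the cross terms $c_s^k(\bs A\bs c^s)_k$ are not sign-definite index by index (the spectral-radius bound only works after summing over $k$ with equal weights), so even the diagonal diffusive damping $-d_k\lambda_s(c_s^k)^2$ does not obviously dominate once you weight by $p_k/\overline v_k\to\infty$. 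Your argument does not close this gap, and invoking an abstract average-Lyapunov theorem in infinite dimensions on top of it only compounds the problem.

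The paper avoids this difficulty altogether by choosing a different Lyapunov functional:
\[
F(\bs v)=\exp\Bigl(\sum_{k=1}^n p_k\,\overline{\log v_k}\Bigr),\qquad \overline{\log v_k}=\int_\Omega \log v_k(\bs x,t)\,\D\bs x,
\]
i.e.\ averaging the logarithm rather than taking the logarithm of the average. Because the equation \eqref{eq1:7} is already in the form $\partial_t\log v_k=(\bs{Av})_k-f^{sp}+d_k\Delta v_k$, integrating over $\Omega$ kills the Laplacian (Neumann conditions) and yields directly
\[
\dot{\overline{\log v_k}}=(\bs{A\overline v})_k-f^{sp}(t),
\]
so that $\dot F/F=\langle\bs{A\overline v},\bs p\rangle-f^{sp}(t)$ \emph{with no singular $1/\overline v_k$ factors at all}. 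Then the fluctuation control you did correctly (Poincar\'e plus \eqref{eq4:2}) gives $f^{sp}(t)\le\langle\bs{A\overline v},\overline{\bs v}\rangle$, whence $\dot F/F\ge\langle\bs{A\overline v},\bs p\rangle-\langle\bs{A\overline v},\overline{\bs v}\rangle$. Jensen's inequality gives $F(\bs v)\le\prod_k\overline v_k^{\,p_k}$, so $F$ still vanishes on $\bd S_n(\Omega_t)$, and the contradiction argument goes through cleanly. The entire ``hard part'' of your outline simply does not arise.
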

\begin{proof}
Consider a functional, depending on the variable $t$, on the set $S_n(\Omega_t)$,
$$
F(\bs v)=F\bigl(\bs v(\bs x,t)\bigr)=\exp\Bigl(\sum_{k=1}^n p_k\overline{\log v_k}(\bs x,t)\Bigr),\quad \bs v(\bs x,t)\in S_n(\Omega_t),
$$
where $\bs p\in \Int S_n$,
$$
\overline{\log v_k}(\bs x,t)=\int_\Omega \log v_k(\bs x,t)\D \bs x.
$$
Note that $F(\bs v)>0$, if $\bs v\in\Int S_n(\Omega_t)$.

Consider a sequence of vector-functions $\bs v^s(\bs x,t),\,s=1,2,\ldots$, that converges to some element $\bs v(\bs x,t)\in\bd S_n(\Omega_t),\,t>0$. Hence for some indexes $k\in K$ we have
$$
\overline{v}_k^s(t)=\int_\Omega v_k^s(\bs x,t)\D \bs x\to0,\quad s\to \infty,\,k\in K.
$$
By Jensen's inequality
$$
\overline{\log v_k}(\bs x,t)\leq \log \overline{v}_k(t),\quad k=1,\ldots,n.
$$
From the convergence of $\overline{v}_k^s(t)$ to zero and the last inequality we have
$$
F\bigl(\bs v^s(\bs x,t)\bigr)\leq \exp \Bigl(\sum_{k=1}^n p_k\log \overline{v}_k^s(t)\Bigr)=\prod_{k=1}^n \Bigl(\overline{v}_k^s(t)\Bigr)^{p_k}\to 0,
$$
for $\bs p\in\Int S_n$. Therefore $F(\bs v)$ is equal to zero on the set $\bd S_n(\Omega)$.

We also note that
$$
\frac{d}{dt}F(\bs v)=\dot{F}(\bs v)=F(v)\sum_{k=1}^n p_k\dot{\overline{\log v_k}}(\bs x,t),
$$
where the dot denotes the derivative with respect to time.

Rewrite the reaction--diffusion replicator equation \eqref{eq1:7} as
$$
\frac{\partial}{\partial t}\log v_k(\bs x,t)=\bigl(\bs{Av}(\bs x,t)\bigr)_k-f^{sp}(t)+d_k\Delta v_k (\bs x,t), \quad k=1,\ldots,n,\quad \bs v\in\Int S_n(\Omega_t).
$$
On integrating with respect to $\bs x\in\Omega$ we have
$$
\dot{\overline{\log v_k}}(\bs x,t)=\bigl(\bs{A\overline{v}}(t)\bigr)_k-f^{sp}(t),\quad k=1,\ldots,n.
$$
Therefore,
\begin{equation}\label{eq4:4}
    \dot{F}(\bs v)=F(\bs v)\bigl(\langle \bs{A\overline{v}}(t),\bs p\rangle-f^{sp}(t)\bigr).
\end{equation}

Consider the representation
\begin{equation}\label{eq4:5}
    v_k(\bs x,t)=\overline{v}_k(t)+V_k(\bs x,t),\quad k=1,\ldots,n,
\end{equation}
where
$$
\overline{v}_k(t)=\int_\Omega v_k(\bs x,t)\D\bs x,\quad V_k(\bs x,t)=\sum_{s=1}^\infty c_s^k(t)\psi_s(\bs x).
$$
From \eqref{eq4:5} it follows that
\begin{equation}\label{eq4:6}
f^{sp}(t)=\langle\bs{A\overline{v}}(t),\bs{\overline{v}}(t)\rangle+\int_\Omega \langle\bs{AV}(\bs x,t),\bs{V}(\bs x,t)\rangle\D \bs x-\sum_{k=1}^n\int_\Omega d_k\|\nabla V_k(\bs x,t)\|^2\D\bs x.
\end{equation}
Since
$$
\int_\Omega V_k(\bs x,t)\D\bs x=0,\quad k=1,\ldots,n,
$$
by Poincar\'{e}'s inequality \eqref{eq4:1a} we have
$$
\lambda_1 \int_\Omega \bigl(V_k(\bs x,t)\bigr)^2\D \bs x\leq \int_\Omega \|\nabla V_k(\bs x,t)\|^2\D\bs x,
$$
where $\lambda_1$ is the first nonzero eigenvalue of the problem \eqref{eq2:3}.

If $\mu$ is the spectral radius of $\bs A$, then
$$
|\langle\bs{AV}(\bs x,t),\bs{V}(\bs x,t)\rangle|\leq \mu\sum_{k=1}^n |V_k(\bs x,t)|^2.
$$
As a result, from \eqref{eq4:6} we have
$$
f^{sp}(t)\leq \langle\bs{A\overline{v}}(t),\bs{\overline{v}}(t)\rangle+\sum_{k=1}^n(\mu-d_k\lambda_1)\int_\Omega \bigl(V_k(\bs x,t)\bigr)^2\D \bs x.
$$
If \eqref{eq4:2} holds then
\begin{equation}\label{eq4:7}
    f^{sp}(t)\leq \langle\bs{A\overline{v}}(t),\bs{\overline{v}}(t)\rangle,
\end{equation}
therefore \eqref{eq4:4} yields
\begin{equation}\label{eq4:8}
    \dot{F}(\bs v)\geq F(\bs v)\bigl(\langle \bs{A\overline{v}}(t),\bs p\rangle -\langle \bs{A\overline{v}}(t),\bs{\overline{v}}(t)\rangle\bigr).
\end{equation}
Let us use Remark \ref{rm1:3} and identify $\bs v(\bs x,t)\in\bd S_n(\Omega_t)$ with a $\bs w(t)\in\bd S_n$ such that $\bs w(t)=\overline{\bs{v}}(t)$

From inequality \eqref{eq4:8} it follows that
$$
F(\bs v)\geq C\exp\Bigl\{\int_0^t\bigl(\langle \bs A\bs{\overline{v}}(t),\bs p\rangle-\langle \bs A\bs{\overline{v}}(t),\bs{\overline{v}}(t)\rangle\bigr)\D t\Bigr\},
$$
where $C=F(\bs v)|_{t=0}$.

Assume that there exists a $t>0$ such that $\bs v(\bs x,t)\in\bd S_n(\Omega_t)$. Then $F(\bs v)=0$. On the other hand, using Remark \ref{rm1:3}, we can identify any  $\bs v(\bs x,t)\in\bd S_n(\Omega_t)$ with $\bs w(t)\in \bd S_n$, then, using \eqref{eq4:3}, we must have
$$
F(\bs v)\geq C>0,
$$
which proves that the system is persistent.
\end{proof}

\begin{remark} To validate condition \eqref{eq4:3} is an independent algebraic problem.

Here is one possible approach. Assume that the vector $\bigl(\bs{A}^{\top}\bigr)^{-1}\bs{1}$ is positive. Here $\bs 1=(1,\ldots,1)^{\top}\in\R^n$. Consider $\bs p\in\Int S_n$
$$
\bs p=\frac{\bigl(\bs{A}^{\top}\bigr)^{-1}\bs{1}}{\langle\bigl(\bs{A}^{\top}\bigr)^{-1}\bs{1},\bs 1\rangle}\,.
$$
For any $\bs w\in\bd S_n$, one has
$$
\langle \bs{Aw},\bs p\rangle=\frac{1}{\langle\bigl(\bs{A}^{\top}\bigr)^{-1}\bs{1},\bs 1\rangle}\,.
$$
On the other hand
$$
\langle \bs{Aw},\bs w\rangle\leq \mu \langle \bs{w},\bs w\rangle,
$$
where $\mu$ is the spectral radius of $\bs A$. Since $\langle \bs{w},\bs w\rangle\leq 1$ for any $\bs w\in\bd S_n$, then the inequality
$$
\mu<\frac{1}{\langle\bigl(\bs{A}^{\top}\bigr)^{-1}\bs{1},\bs 1\rangle}
$$
yields the condition \eqref{eq4:3}.

To illustrate this approach, consider a very simple example.
\begin{example}
Consider the following replicator system with the global regulation of the second kind
\begin{equation}\label{eq4a:1}
    \begin{split}
      \pdt u_1&=u_1(\beta u_2+k_2u_2-f^{sp}(t)+d_1\Delta u_1),  \\
\pdt u_2&=u_2(k_2u_1-f^{sp}(t)+d_2\Delta u_2),
    \end{split}
\end{equation}
$\bs x\in\Omega$ and $\partial_{\nu}u_i=0,\,\bs x\in\Gamma$. Using the approach outlined above, we obtain
$$
p_1=\frac{k_1}{k_1+k_2-\beta}\,,\quad p_2=\frac{k_2-\beta}{k_1+k_2-\beta}\,.
$$
The condition \eqref{eq4:3} takes the form
$$
(k_1w_2+\beta w_1)p_1+k_2w_2p_2-(k_1+k_2)w_1w_2-\beta w_1^2>0.
$$
This is obviously true for $w_1=0,\,w_2=1$. For the case $w_1=1,\,w_2=0$ we have
$$
\frac{\beta k_1-(k_1+k_2)\beta+\beta^2}{k_1+k_2-\beta}>\frac{\beta^2-k_2\beta}{k_1+k_2-\beta}\,.
$$
The last expression will be positive if we require $k_1>\beta>k_2$. Finally, consider, e.g., the square $\Omega=(0,1)\times(0,1)$. In this area the condition \eqref{eq4:2} yields
$$
d_{\min}\geq \frac{\beta+\sqrt{\beta^2+4k_1k_2}}{8\pi^2}\,.
$$
\end{example}

The estimate in Theorem \ref{th4:1} gives only sufficient condition, as it can be seen, for instance, from the hypercycle replicator equation with the matrix
$$
\bs A=\begin{bmatrix}
        0 & 0 & \ldots & 0 & a_1 \\
        a_2 & 0 & \ldots & 0 & 0 \\
        0 & a_3 & 0 & \ldots & 0 \\
        \vdots & \ddots & \ddots &  & \vdots \\
        0 & \ldots & 0 & a_n & 0 \\
      \end{bmatrix}.
$$
It is well known that the hypercyclic system is not only persistent, it is permanent \cite{hofbauer1998ega}, i.e., the variables are separated from zero by a positive constant. Condition \eqref{eq4:3} holds only for the short hypercycles ($n=2,3,4$). Indeed, for $n=2$ we have $\langle \bs{Aw},\bs{w}\rangle=(a_1+a_2)w_1w_2=0,\,\bs w\in S_n$. For $n=3$ \eqref{eq4:3} holds if we choose
$$
p_i=\frac{a_i^{-1}}{R_3},\quad i=1,2,3,\quad R_3=\sum_{j=1}^3\frac{1}{a_j}\,,\quad \mbox{and }R_3\max\{a_1,a_2,a_3\}\leq 4\,,\quad i=1,2,3.
$$
Condition \eqref{eq4:2} yields here
$$
d_{\min}\geq \frac{(a_1a_2a_3)^{1/3}}{\lambda_1}\,.
$$

For the $n=4$ \eqref{eq4:3} will hold for a similar choice of $\bs p$ only for $a_1=a_2=a_3=a_4$.
\end{remark}

\begin{remark} In \cite{bratus2011} we suggested a generalization of the classical notions of the Nash equilibrium and evolutionary stable state for the case of the distributed reaction--diffusion replicator equation with the global regulation of the first kind. Similar definitions and results can be stated for the case of the global regulation of the second kind.

In particular, it can be proved that if a stationary solution $\bs u(\bs x)\in S_n(\Omega)$ to the distributed system \eqref{eq1:4}--\eqref{eq1:8} is Lyapunov stable then
$$
\int_\Omega \langle \bs v(\bs x,t),\bs{Au}(\bs x)\rangle\D \bs x\leq \overline{f}^{sp}(\bs u)
$$
for any $\bs v(\bs x,t)\in S_n(\Omega_t)$. Here $\overline{f}^{sp}(\bs u)$ is defined by \eqref{eq2:1c}.

Moreover, if $\bs u(\bs x)\in \Int S_n(\Omega)$ is a stationary solution to \eqref{eq1:4}--\eqref{eq1:8} and
$$
\int_\Omega \langle \overline{\bs u},\bs{Av}(\bs x,t)\rangle\D \bs x> {f}^{sp}\bigl(\bs v(\bs x,t)\bigr),
$$
for any $\bs v(\bs x,t)$ from a small neighborhood of $\bs u(\bs x)$, then $\bs u(\bs x)$ is asymptotically stable in the sense of the mean integral value. The proofs follow the lines of Theorem 2 and Theorem 3 in~\cite{bratus2011}.
\end{remark}

\section{Numerical analysis of a particular replicator system}
In this section we present an example which possesses an interesting and important feature: The non-distributed replicator equation is shown to be non-permanent (for the chosen parameter values three members of the catalytic network go extinct), whereas the distributed replicator equation with the global regulation of the second kind is permanent (for the same parameter values all six members of the catalytic network are subject to time dependent oscillations).

Consider a replicator system with the matrix
\begin{equation}\label{eq5:1}
    \bs A=\left[
\begin{array}{cccccc}
 0 & 0 & \alpha  & 0 & 0 & \gamma  \\
 \alpha  & 0 & 0 & 0 & \gamma  & 0 \\
 0 & \alpha  & 0 & \gamma  & 0 & 0 \\
 \gamma  & 0 & 0 & \beta  & 0 & 0 \\
 0 & 0 & \gamma  & 0 & \beta  & 0 \\
 0 & \gamma  & 0 & 0 & 0 & \beta  \\
\end{array}
\right].
\end{equation}
The catalytic network which corresponds to the interaction matrix \eqref{eq5:1} presented in Fig.~\ref{fig:1}. \begin{figure}[!t]
\centering
\includegraphics[width=0.4\textwidth]{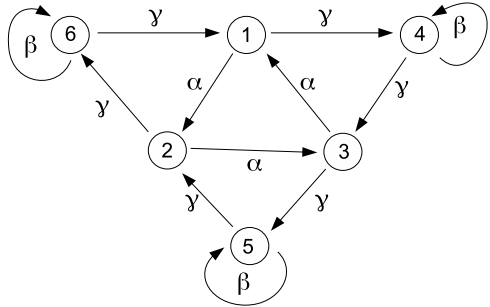}
\caption{A catalytic network of macromolecules. There are six macromolecules. The arrows show the catalytic activity of the molecules. The coefficients are the corresponding rate constants. This network is inspired by the catalytic network of self-replicating RNA molecules, which was shown to be capable of sustained replication \cite{novozhilov2012reaction}}\label{fig:1}
\end{figure}This particular cooperative network, which contains two catalytic cycles, is based on the in vitro network of RNA molecules, which was shown to be capable of sustained self-replication~\cite{Vaidya2012}.

Our task is to compare the behavior of solutions of three different analytical approaches to model this network: Classical local replicator equation \eqref{eq1:12}, reaction--diffusion replicator equation with the global regulation of type one \eqref{eq1:9} and reaction-diffusion replicator equation with the global regulation of type two \eqref{eq1:7}.

Let the parameters take the values
$$
\alpha=1.75,\quad \beta=0.7,\quad \gamma=2.0.
$$
For these parameter values it can be shown that there are fifteen rest points  of \eqref{eq1:12} belonging to $S_n$, including one isolated rest point in $\Int S_n$. However, this interior rest point is unstable. Moreover, numerical experiments show that there are several stable rest points such that three coordinates stay positive whereas other three species go extinct. In general, the conclusion is that for the taken parameter values the system is not permanent and cannot guarantee survival of all the molecules. We do not give illustrations here because the distributed model with the global regulation of the first kind shows very similar behavior (in full accordance with the theoretical analysis in \cite{bratus2011}).

Now consider the replicator equation with the global regulation of the first type \eqref{eq1:9} on $\Omega=(0,1)$ with Neumann's boundary conditions. The initial conditions for all the subsequent calculations are shown in Fig. \ref{fig:2}. The details of the numerical scheme are discussed elsewhere \cite{bratus2006ssc}.
\begin{figure}[!b]
\centering
\includegraphics[width=0.4\textwidth]{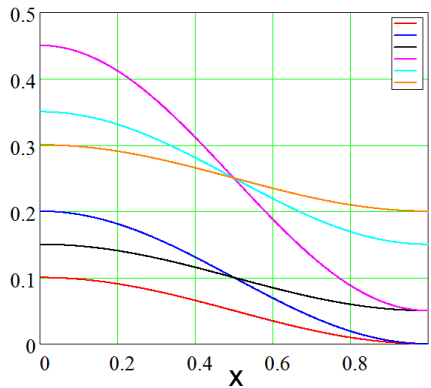}
\caption{Initial conditions for solving problems \eqref{eq1:9} and \eqref{eq1:7} on $\Omega=(0,1)$}\label{fig:2}
\end{figure}
We take two different vectors of the diffusion coefficients: $\bs d_1=(0.4,0.5,0.4,0.5,0.4,0.5)\,,\bs d_2=(0.04,0.05,0.04,0.05,0.04,0.05)$. As it was proved in \cite{bratus2011}, for larger diffusion coefficients the system is actually becomes spatially homogeneous for sufficiently large $t$. For example, in Fig. \ref{fig:3} it is sufficient to take $t=60$; by this time moments the distributions of the species are spatially uniform.  The right panel in Fig.~\ref{fig:3} shows the time evolution for the mean values of the variables
$$
\overline{v}_i(t)=\int_{\Omega}v_i(\bs x,t)\D x,\quad i=1,\ldots,6.
$$
It can be seen that after the initial transitory period, the solutions actually attracted to the spatially homogeneous (left panel) stationary state, which corresponds exactly to the asymptotically stable rest point of the non-distributed replicator equation \eqref{eq1:12}.
\begin{figure}[!t]
\includegraphics[width=0.45\textwidth]{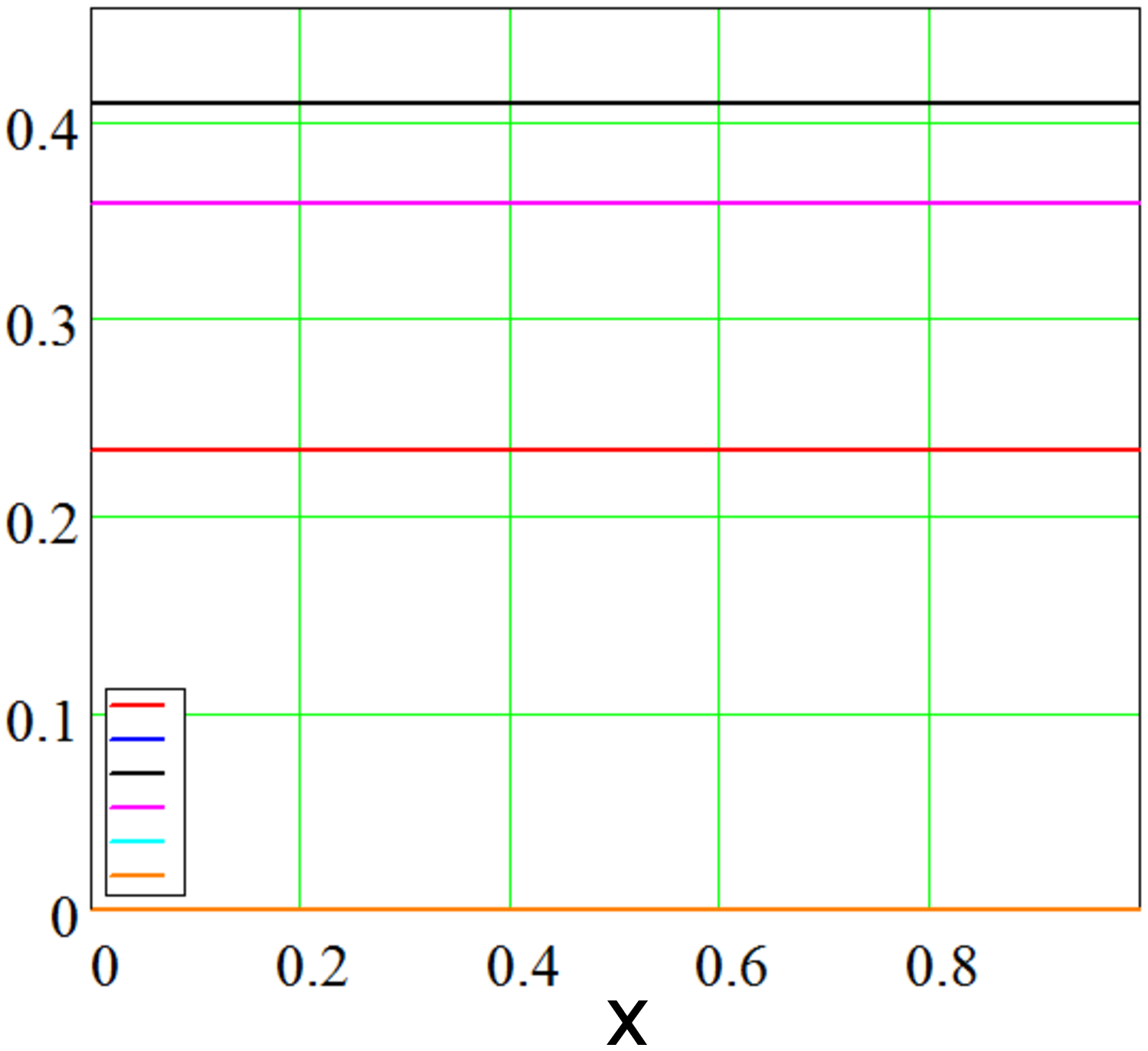}\hfill
\includegraphics[width=0.495\textwidth]{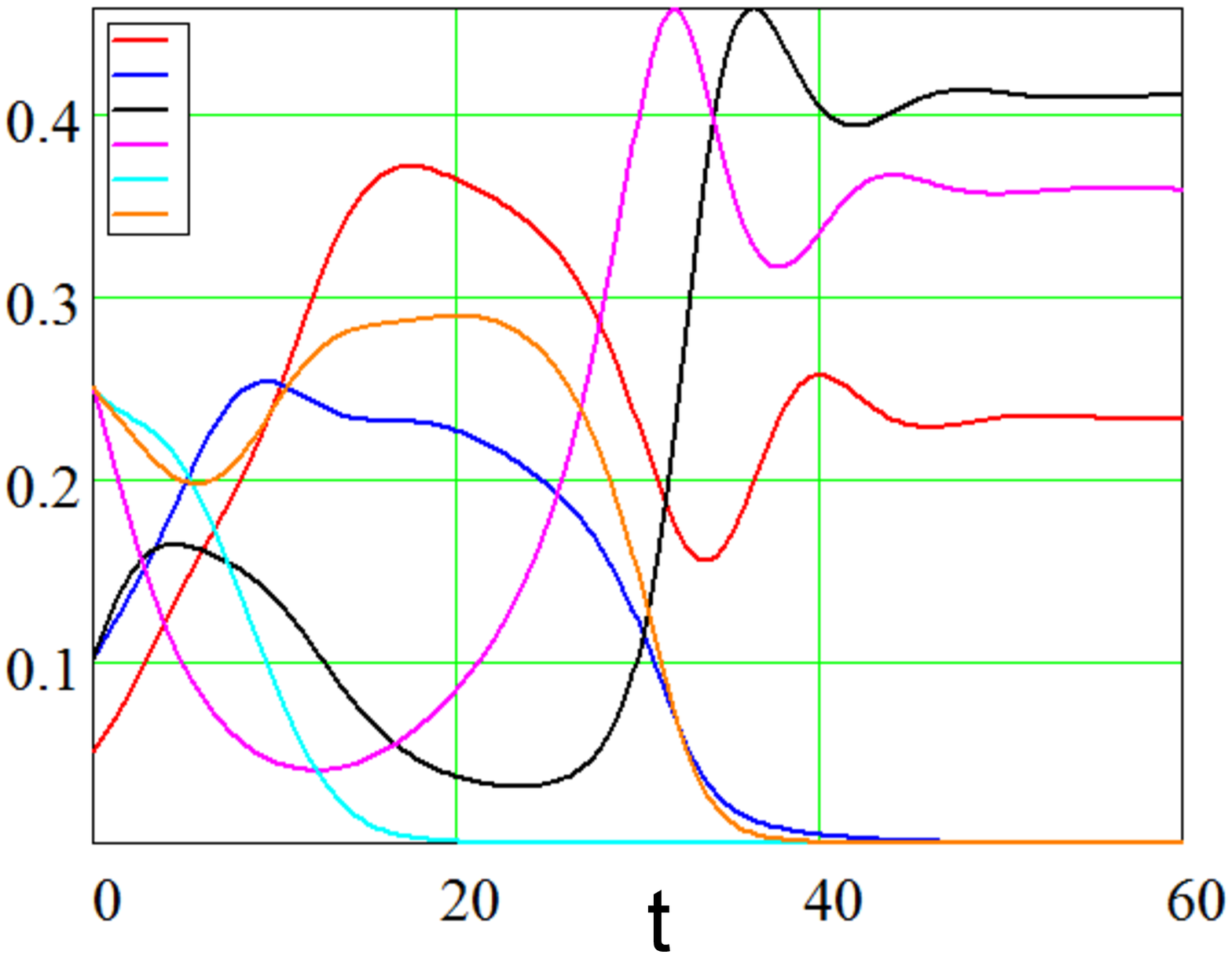}
\caption{Solutions to the replicator equation with the global regulation of the first kind \eqref{eq1:9} on $\Omega=(0,1)$ with interaction matrix \eqref{eq5:1} and with diffusion coefficients $\bs d_1=(0.4,0.5,0.4,0.5,0.4,0.5)$. Left panel: Solutions at the moment $t=60$. Right panel: The averages of the solutions $\overline{v}_i(t),\,i=1,\ldots,6$ depending on time $t$}\label{fig:3}
\end{figure}

For smaller diffusion coefficients $\bs d_2$ spatially nonhomogeneous stationary solutions appear (see Fig. \ref{fig:5}, left panel).\begin{figure}[!b]
\includegraphics[width=0.45\textwidth]{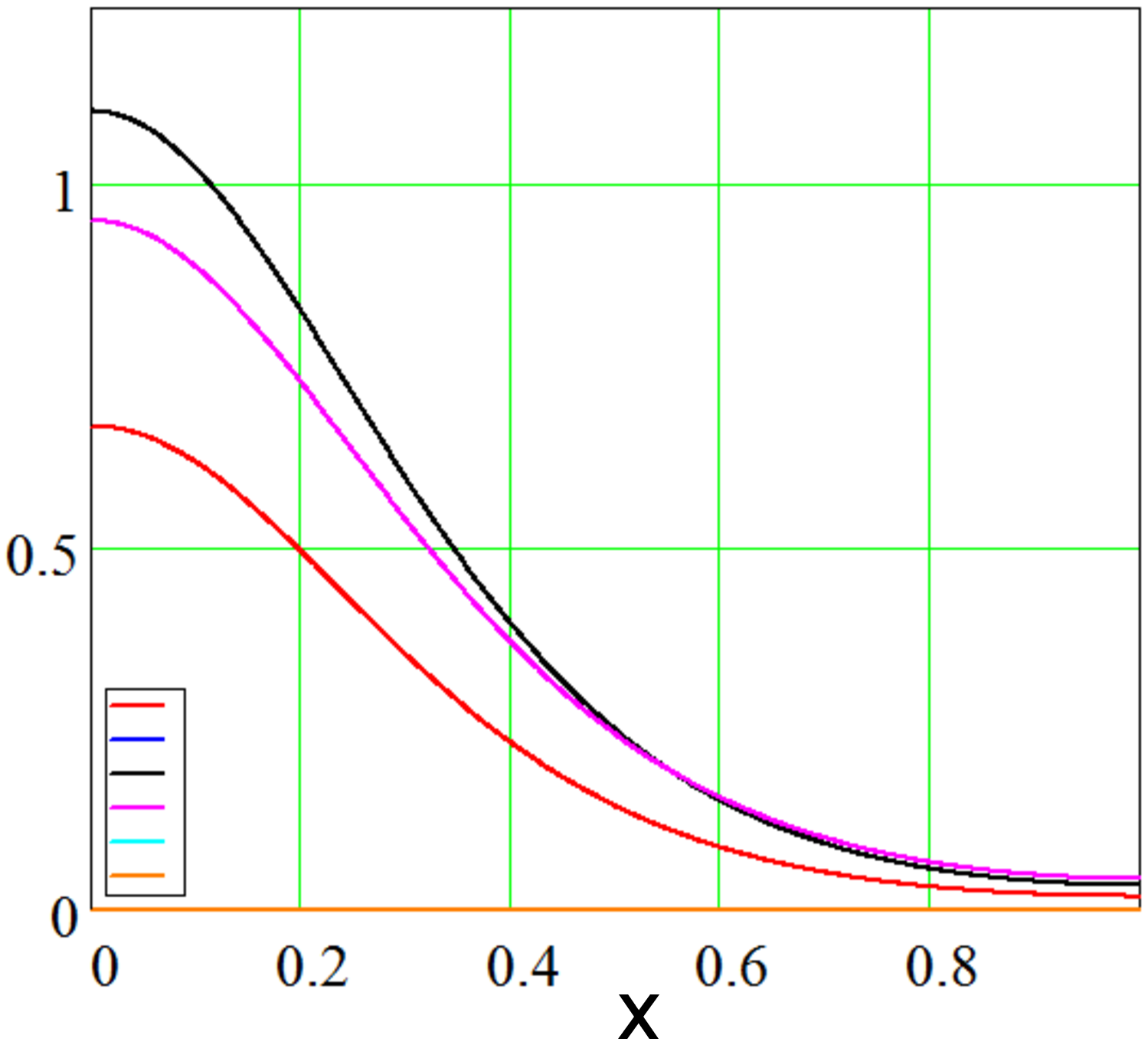}\hfill
\includegraphics[width=0.495\textwidth]{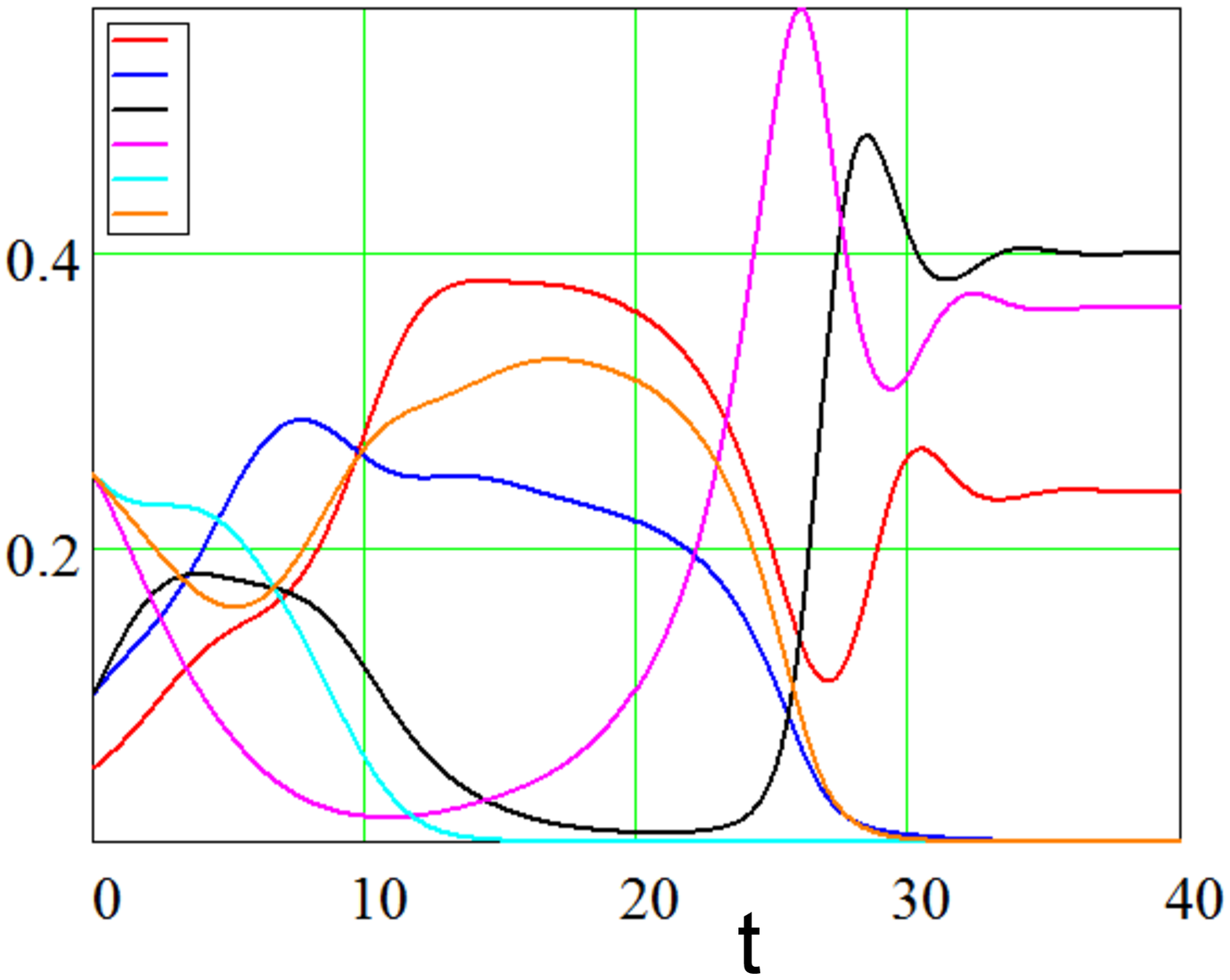}
\caption{Solutions to the replicator equation with the global regulation of the first kind \eqref{eq1:9} on $\Omega=(0,1)$ with interaction matrix \eqref{eq5:1} and with diffusion coefficients $\bs d_2=(0.04,0.05,0.04,0.05,0.04,0.05)$. Left panel: Solutions at the moment $t=40$. Right panel: The averages of the solutions $\overline{v}_i(t),\,i=1,\ldots,6$ depending on time $t$}\label{fig:5}
\end{figure} However, in the average, the behavior is still qualitatively similar to that of the homogeneous system: Three macromolecules persist whereas three others disappear from the system, which can be seen from the dynamics of the average values of the variables $\overline{v}_i(t)$ in the right panel of Fig. \ref{fig:5}.

As it was proved in \cite{bratus2011}, in the sense of the average behavior, we cannot expect qualitatively different behavior from the distributed replicator equation with the global regulation of the first kind.

\begin{figure}[!t]
\includegraphics[width=0.49\textwidth]{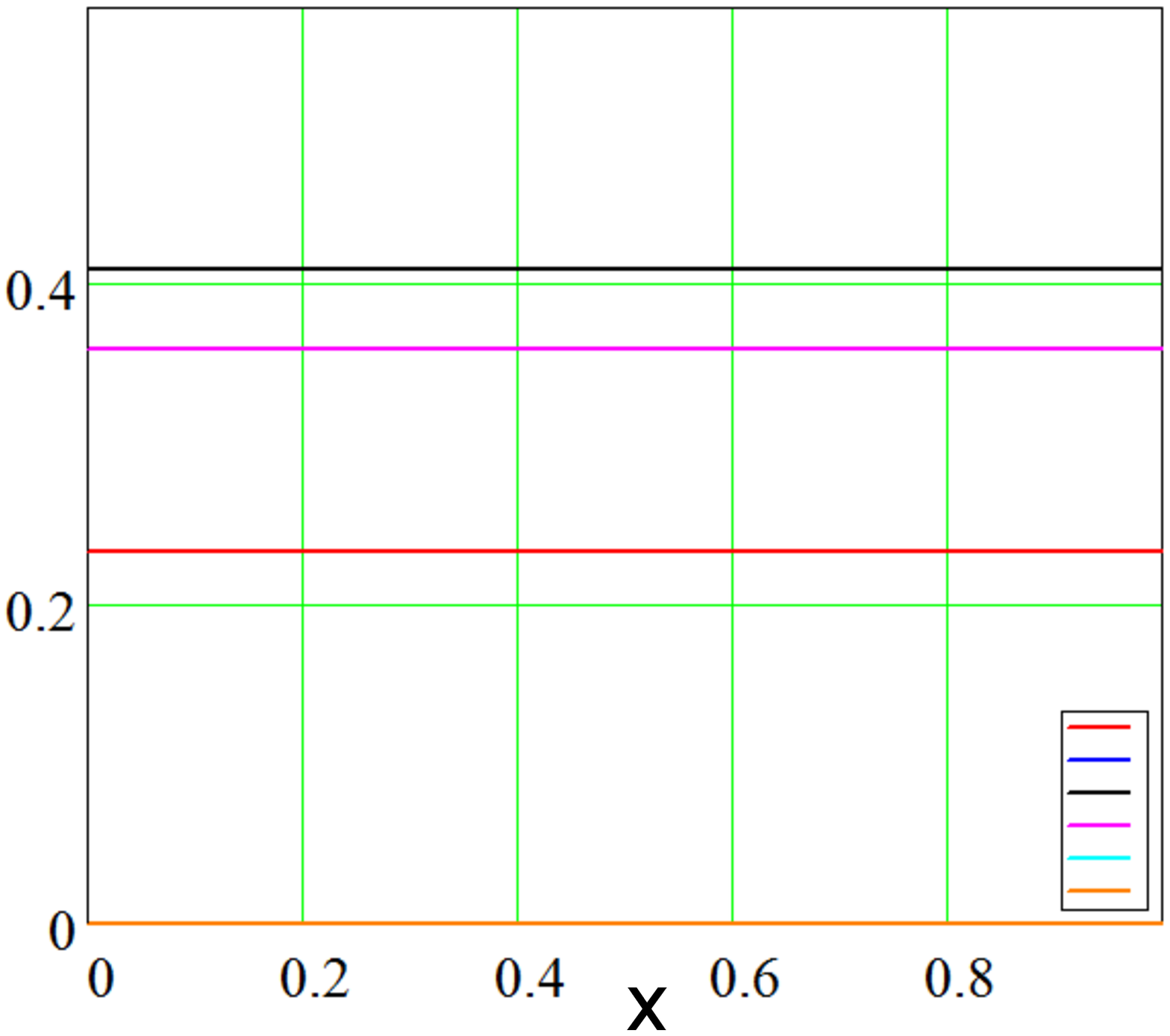}\hfill
\includegraphics[width=0.495\textwidth]{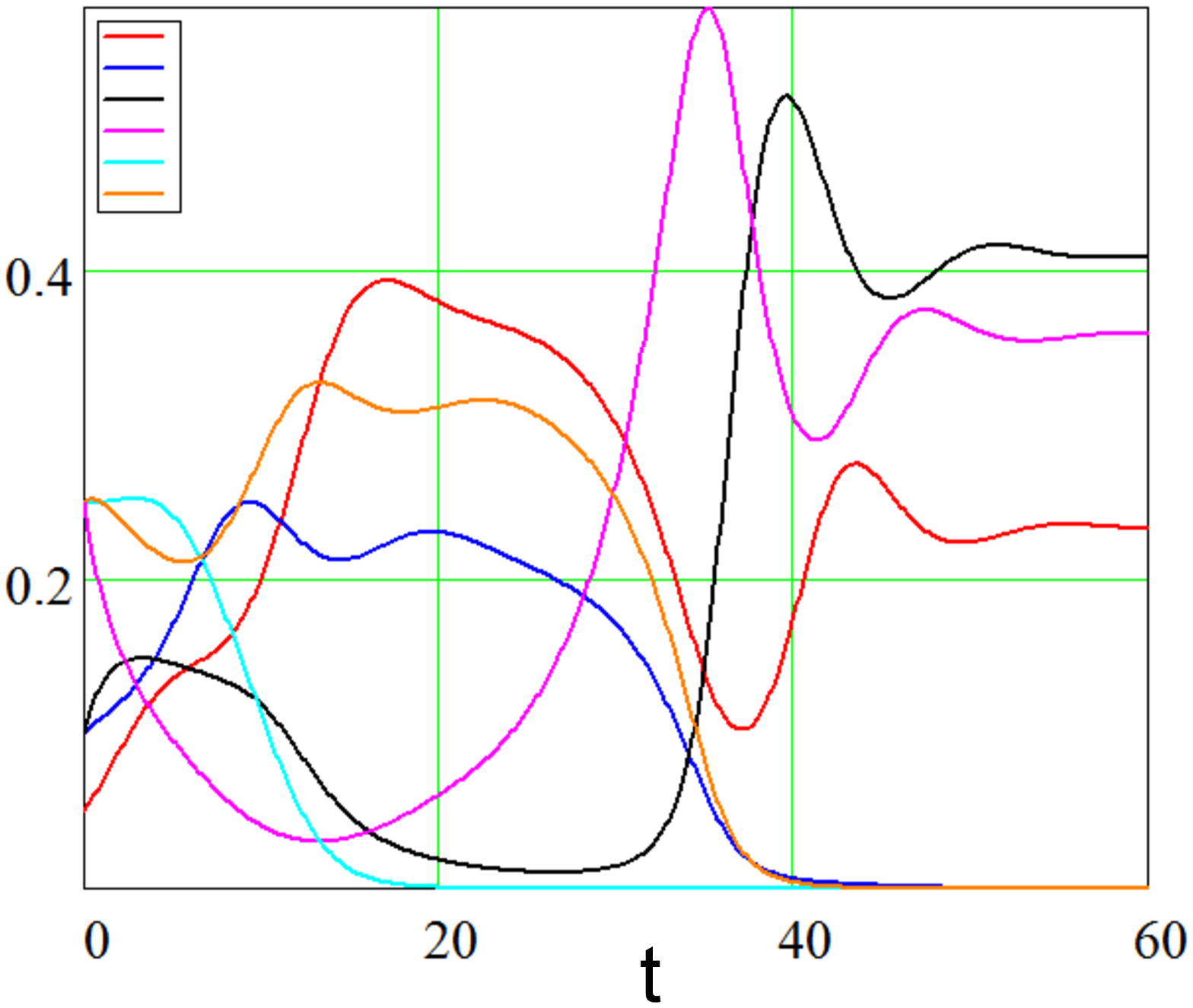}
\caption{Solutions to the replicator equation with the global regulation of the second kind \eqref{eq1:7} on $\Omega=(0,1)$ with interaction matrix \eqref{eq5:1} and with diffusion coefficients $\bs d_1=(0.4,0.5,0.4,0.5,0.4,0.5)$. Left panel: Solutions at the moment $t=60$. Right panel: The averages of the solutions $\overline{v}_i(t),\,i=1,\ldots,6$ depending on time $t$}\label{fig:6}
\end{figure}
A quite different picture is observed in the case of the reaction--diffusion replicator equation with the global regulation of the second kind \eqref{eq1:7}. In particular, while the diffusion coefficients are large enough, the solution behavior corresponds to the non-distributed case (as was proved in Sections 3 and 4). In Fig.~\ref{fig:6} it can be seen that, as well as in the previously discussed case of the global regulation of the first kind, there is an asymptotically stable spatially homogeneous stationary state, at which three macromolecules approach non-zero concentrations, whereas three others go extinct (cf. Figs. \ref{fig:3} and \ref{fig:5}). Decreasing the diffusion coefficients yields a qualitative change in the system behavior. Firstly, the solutions do not seem to approach a spatially uniform stationary state, the numerical calculations suggest that they keep oscillating. Secondly and most importantly, we observe that the concentrations of all six macromolecules are separates from zero, the system becomes permanent (see the right panel in Fig. \ref{fig:8}).
\begin{figure}[!ht]
\includegraphics[width=0.49\textwidth]{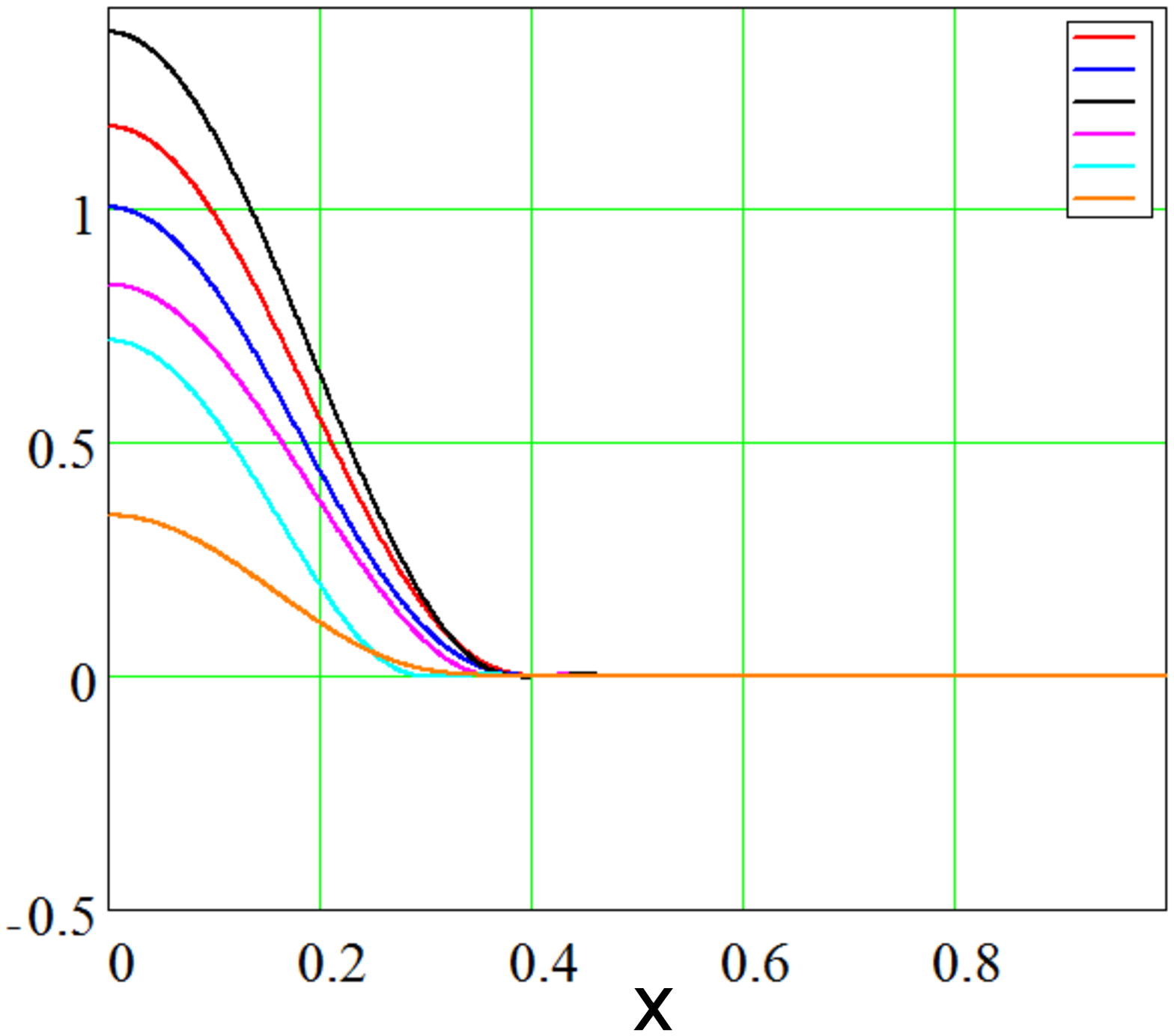}\hfill
\includegraphics[width=0.49\textwidth]{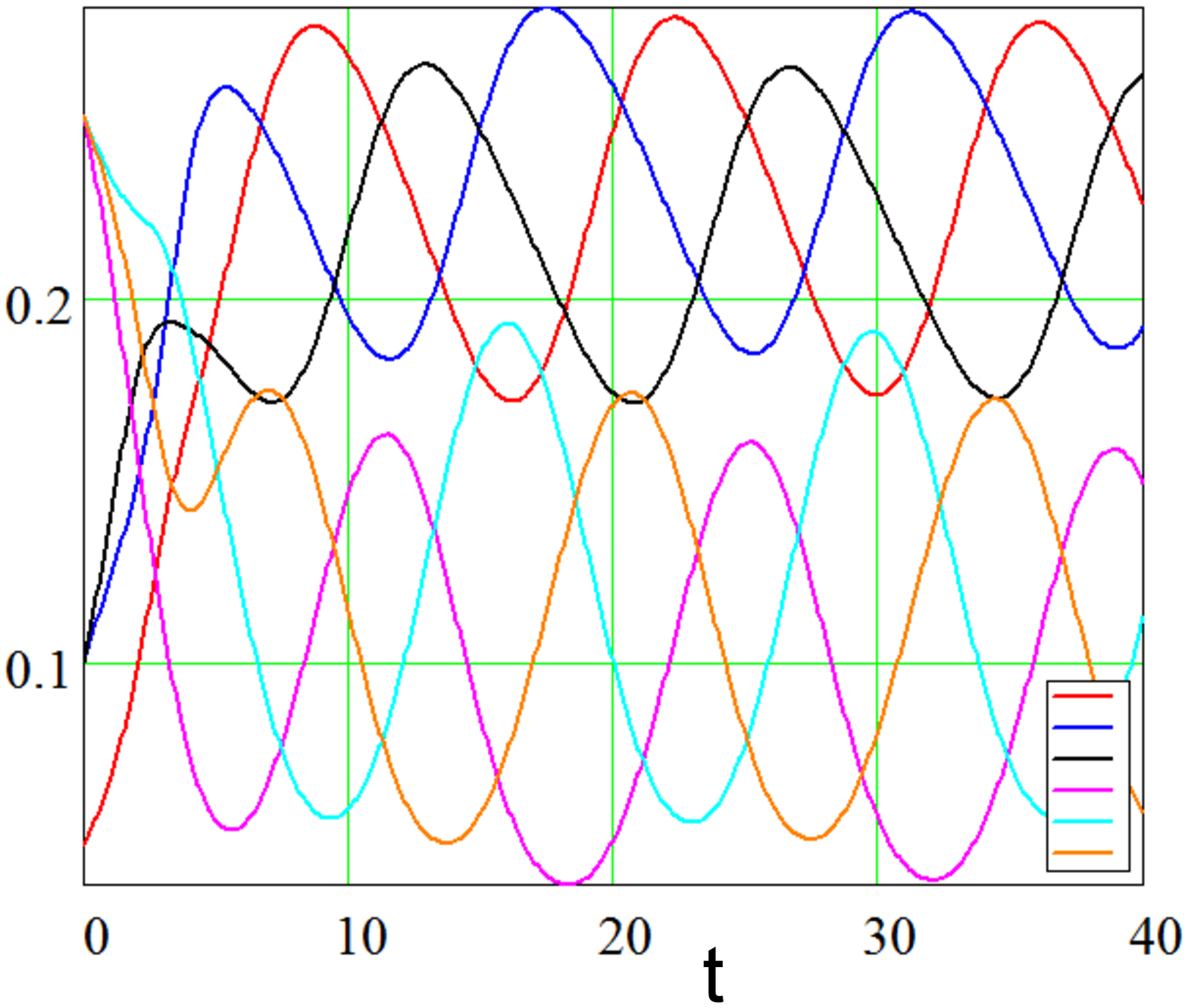}
\caption{Solutions to the replicator equation with the global regulation of the second kind \eqref{eq1:7} on $\Omega=(0,1)$ with interaction matrix \eqref{eq5:1} and with diffusion coefficients $\bs d_2=(0.04,0.05,0.04,0.05,0.04,0.05)$. Left panel: Solutions at the moment $t=40$. Right panel: The averages of the solutions $\overline{v}_i(t),\,i=1,\ldots,6$ depending on time $t$}\label{fig:8}
\end{figure}

In Fig. \ref{fig:9} time dependent solutions in the space $(x,t)$ are shown that correspond to the case of Fig. \ref{fig:8}.
\begin{figure}[!b]
\includegraphics[width=0.33\textwidth]{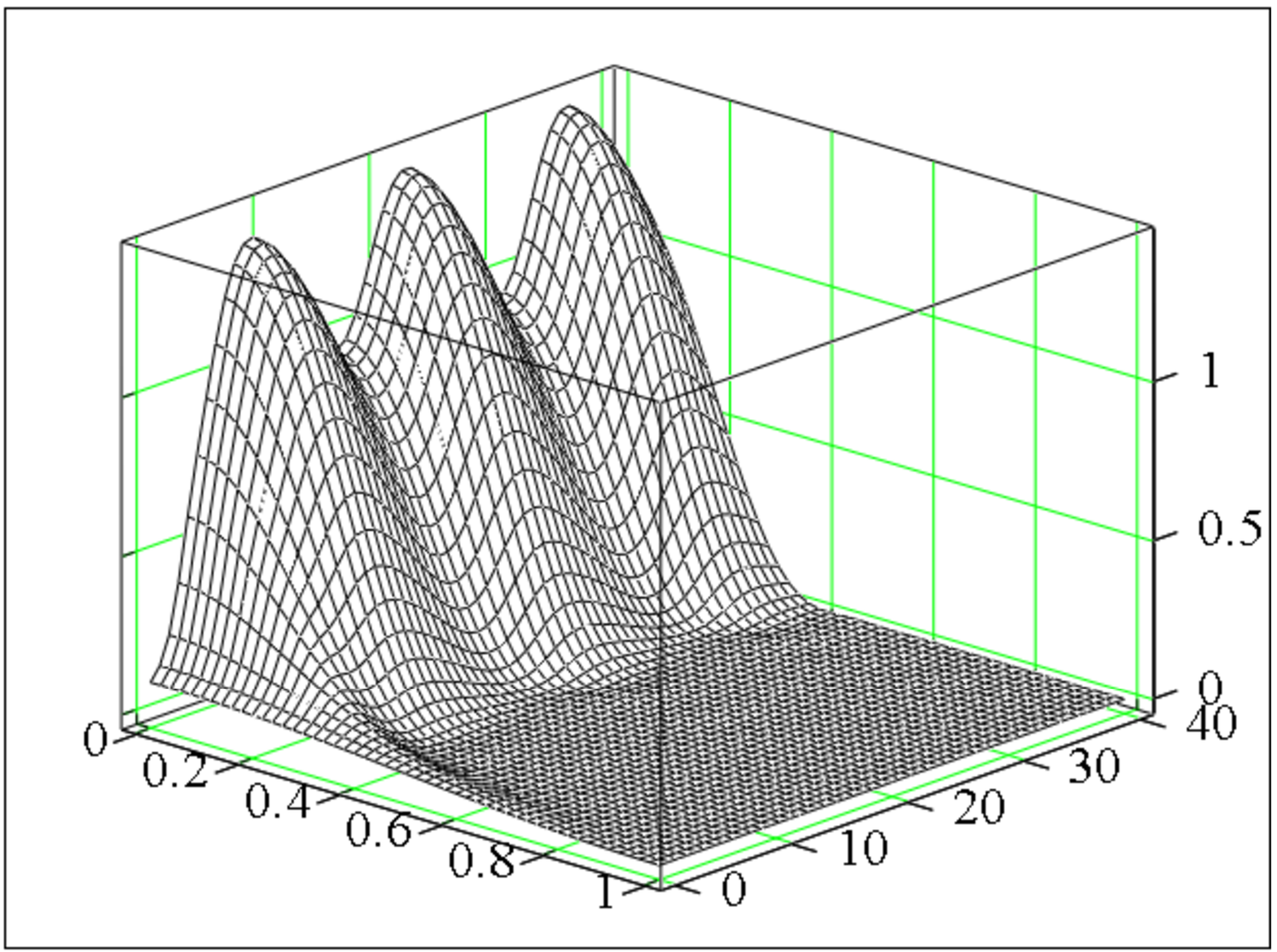}
\includegraphics[width=0.33\textwidth]{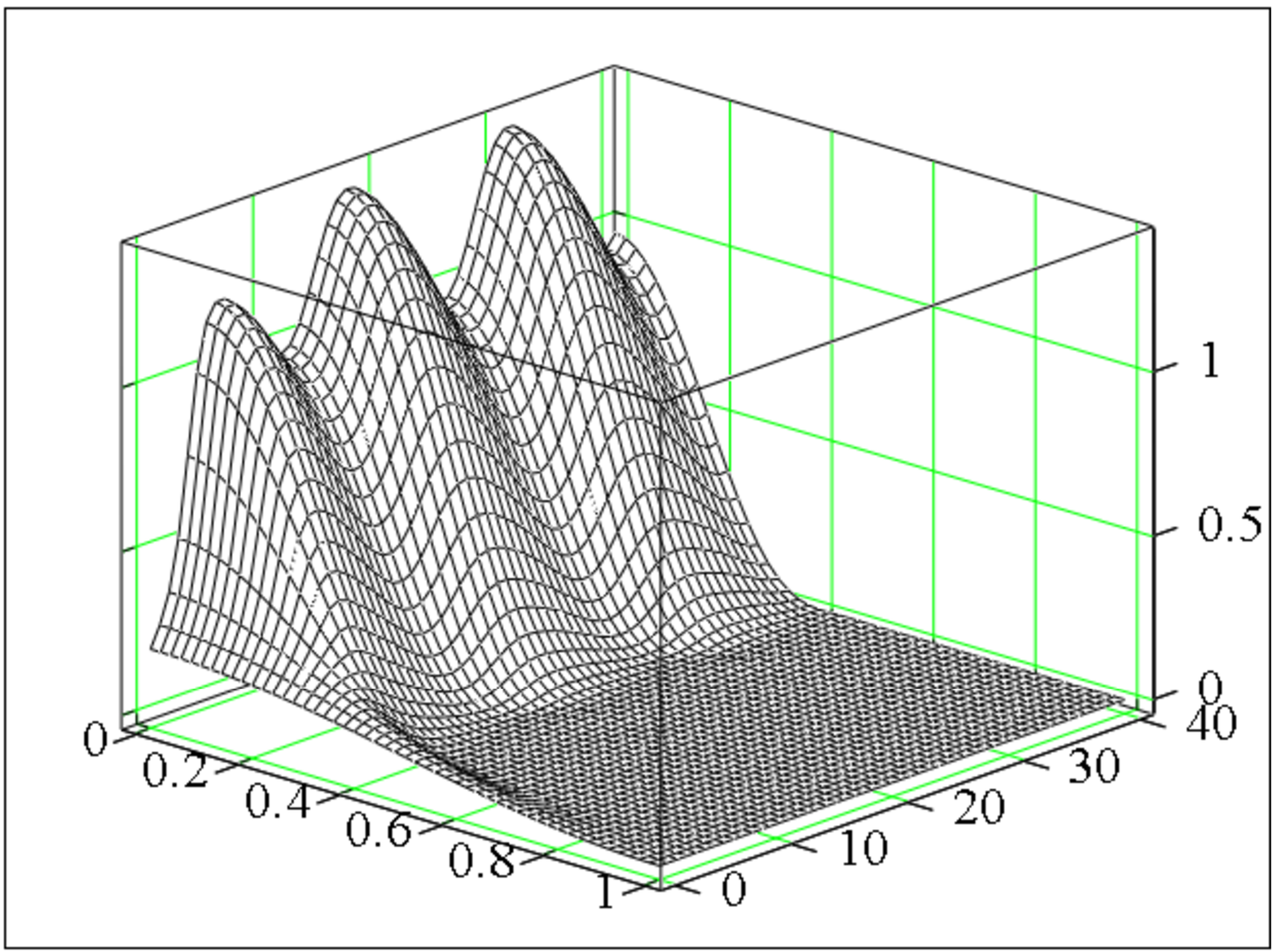}
\includegraphics[width=0.33\textwidth]{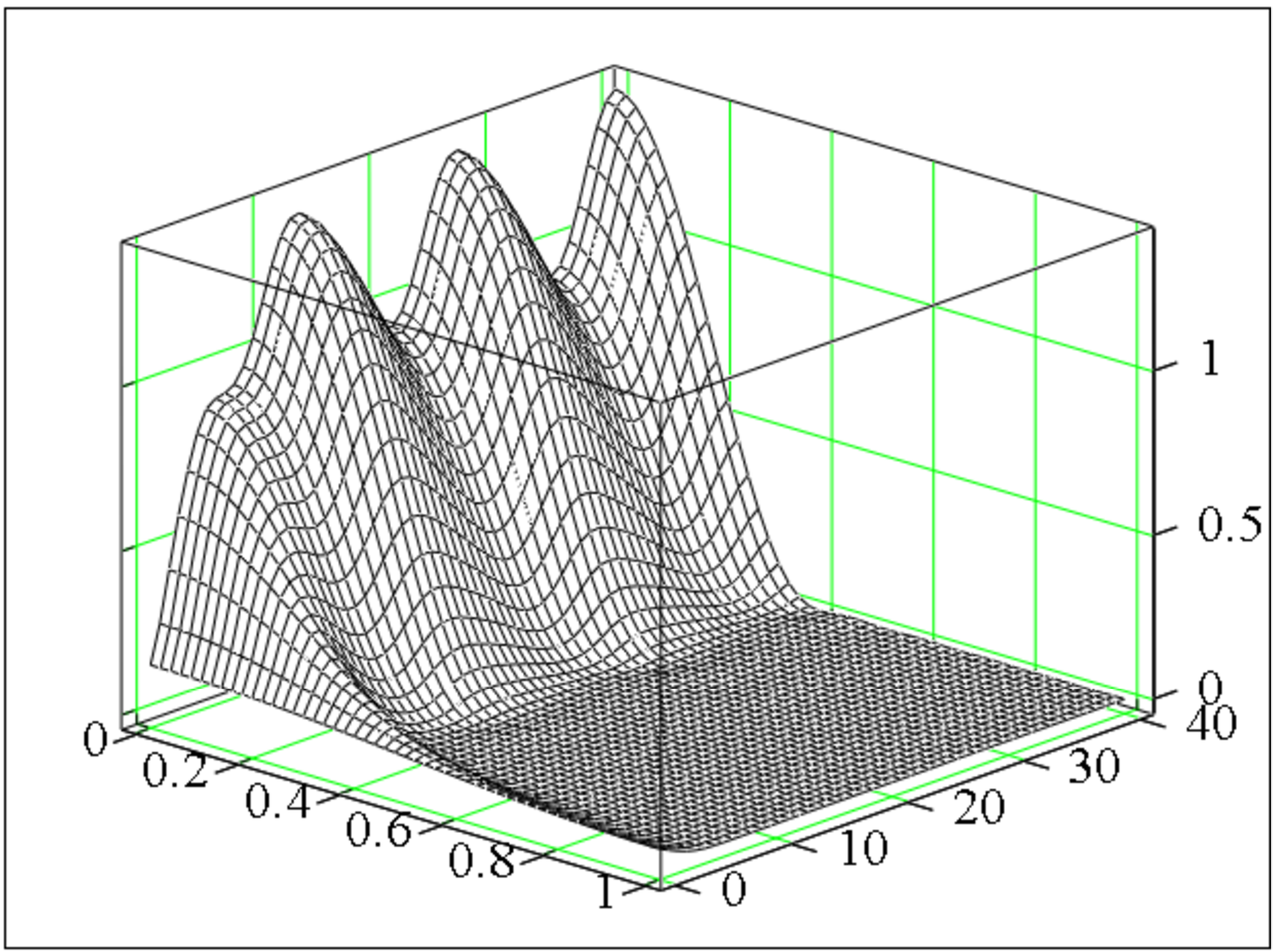}
\includegraphics[width=0.33\textwidth]{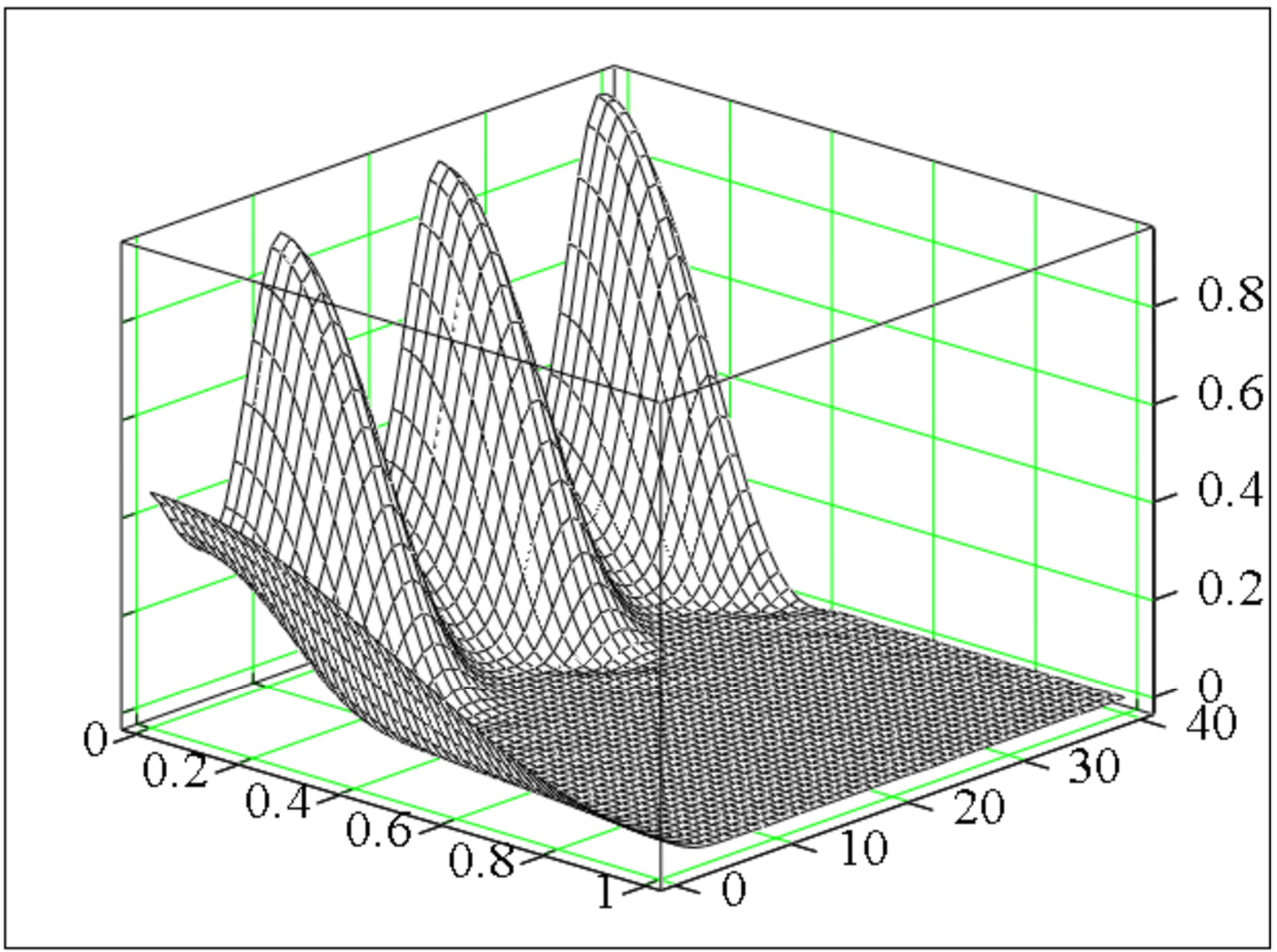}
\includegraphics[width=0.33\textwidth]{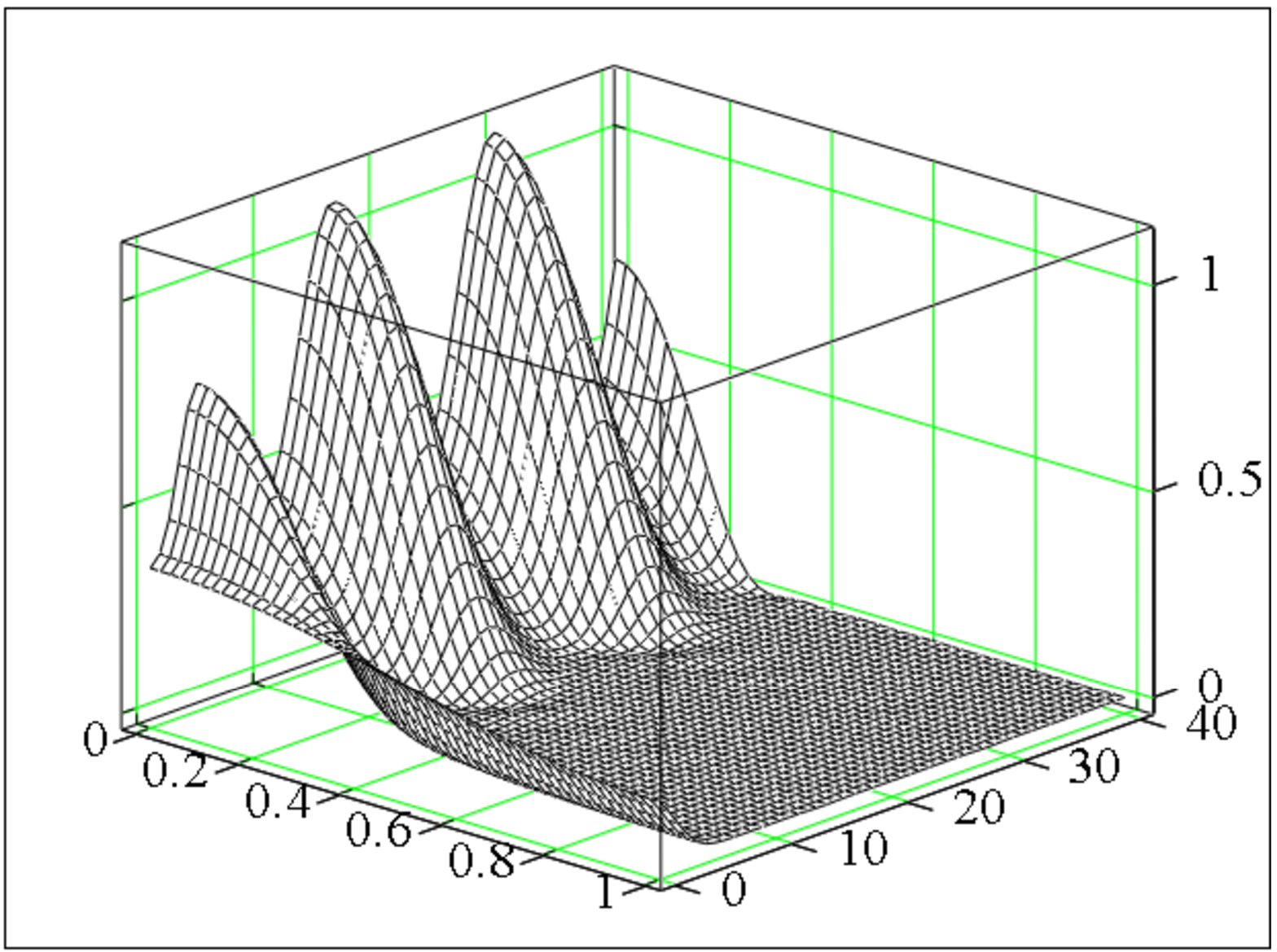}
\includegraphics[width=0.33\textwidth]{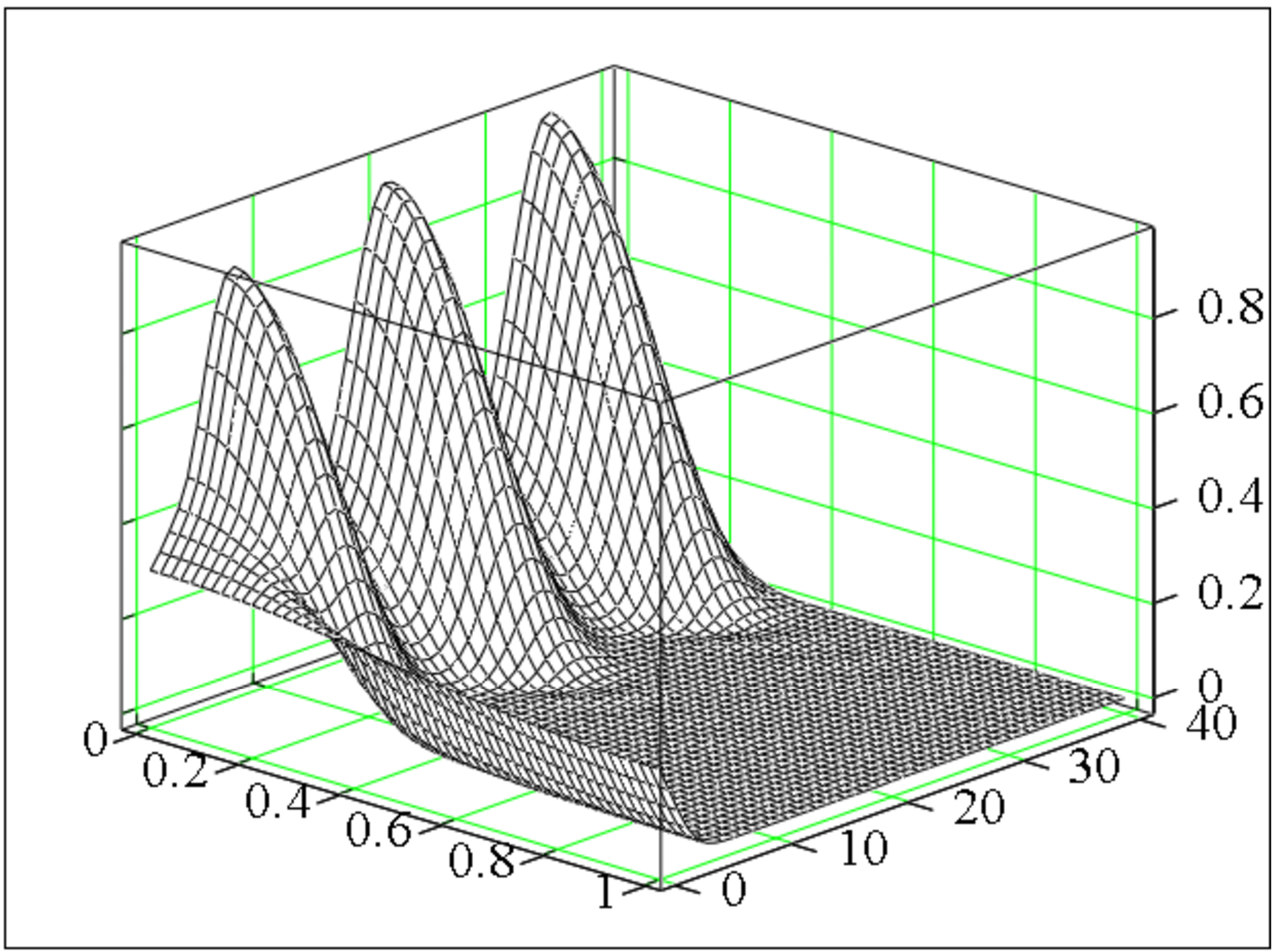}
\caption{Solutions in the $(x,t)$ space to the problem \eqref{eq1:7} on $\Omega=(0,1)$ with interaction matrix \eqref{eq5:1} and with diffusion coefficients $\bs d_2$. The averaged values of the variables $\overline{v}_i(t)$ are shown in the right panel in Fig. \ref{fig:8} }\label{fig:9}
\end{figure}

\clearpage
\section{Concluding remarks}
We presented in this manuscript an analytical and numerical analysis of a parabolic type equation, which we call the replicator equation with the global regulation of the second kind. We showed that this equation possesses interesting features that allow considering it as a viable candidate for the replicator equation with explicit spatial structure. In particular, our analytical and numerical analysis shows that
\begin{itemize}
\item For sufficiently large diffusion coefficients it is reasonable to expect that the behavior of the solutions to the replicator equation with the global regulation of the second kind can be inferred from the analysis of the solutions of the corresponding non-distributed replicator equation (Sections 3 and 4). As expected, in this case the mean-field approximation of a well-stirred reactor works well.
\item Some of the results concerning the permanence of the classical replicator equation can be used to obtain sufficient conditions for the permanence or persistence of the replicator equation with the global regulation of the second kind (Section 5)
\item Most importantly, we were able to show numerically that for sufficiently small diffusion coefficients it can be expected that the properties of the distributed system differ significantly from the properties of the non-distributed system. In particular, the global regulation of the second kind mediates coexistence of different macromolecules. The importance of our results follows also from the fact that the particular replicator system we consider in Section 6 is based on the in vitro experiments, in which it was shown that all six macromolecules survive.
\end{itemize}

In conclusion we note that the important phenomena observed numerically in Section 6 call for analytical proofs, and this is one of our ongoing projects to supplement the numerical findings of the current text with analytical theory.

\paragraph{Acknowledgements:} This research is supported in part by the Russian Foundation for Basic Research (RFBR) grant \#10-01-00374 and joint
grant between RFBR and Taiwan National Council \#12-01-92004HHC-a. ASN's research is supported in part by ND EPSCoR and NSF grant \#EPS-0814442.


\end{document}